\documentclass[journal]{IEEEtran}

\usepackage{cite}
\usepackage{amsmath,amssymb,amsfonts}
\usepackage{algorithm}
\usepackage{algpseudocode}
\usepackage{textcomp}

\usepackage{mathtools}

\usepackage{tabularx, booktabs}
\usepackage{multirow, makecell}

\usepackage{graphicx}
\graphicspath{{./Figures/}}

\usepackage{epstopdf}
\usepackage{subfigure}
\usepackage{array}
\usepackage{enumerate}
\usepackage{enumitem}
\usepackage{color}

\usepackage{tikz, pgfplots}
\usetikzlibrary{positioning,fit,calc,backgrounds}

\usepackage{amsmath, amsthm, amsfonts, amssymb}
\usepackage{mathtools}
\usepackage{graphicx}
\graphicspath{{./Figures/}}
\usepackage{subfigure}
\usepackage{booktabs, multirow, array}
\usepackage{bm} 
\usepackage{xcolor}
\usepackage{hyperref}

\definecolor{darkblue}{RGB}{0,0,180}  
\usepackage{tikz}
\usepackage{cancel}
\hypersetup{
  colorlinks=true,     
  linkcolor=darkblue,     
  citecolor=blue,     
  urlcolor=blue        
}

 \def\cB{\mathcal{B}}  \def\cD{\mathcal{D}}
\def\cE{\mathcal{E}} \def\cF{\mathcal{F}}  
   \def\cL{\mathcal{L}}
 \def\cN{\mathcal{N}} \def\cO{\mathcal{O}} \def\cP{\mathcal{P}}
   \def\cT{\mathcal{T}}
  \def\cW{\mathcal{W}}

 \def\vb{{\bf b}}  
 \def\vf{{\bf f}} \def\vg{{\bf g}} 
\def\vi{{\bf i}}   \def\vl{{\bf l}}
   \def\vp{{\bf p}}
\def\vq{{\bf q}} \def\vr{{\bf r}} \def\vs{{\bf s}} 
\def\vu{{\bf u}} \def\vv{{\bf v}} \def\vw{{\bf w}} \def\vx{{\bf x}}
\def\vy{{\bf y}} \def\vz{{\bf z}}

\def\vA{{\bf A}} \def\vB{{\bf B}}  \def\vD{{\bf D}}
  \def\vG{{\bf G}} \def\vH{{\bf H}}
\def\vI{{\bf I}}  \def\vK{{\bf K}} 
\def\vM{{\bf M}} \def\vN{{\bf N}}  
 \def\vR{{\bf R}} \def\vS{{\bf S}} \def\vT{{\bf T}}
\def\vU{{\bf U}} \def\vV{{\bf V}}  \def\vX{{\bf X}}
\def\vY{{\bf Y}} 

\def\R{\mathbb{R}}
\def\N{\mathbb{N}}
\def\C{\mathbb{C}}

\def\st{\text{s.t.}}

\newcolumntype{C}[1]{>{\centering\arraybackslash}m{#1}}

\newcommand{\std}[1]{{\fontsize{3.5}{5}\selectfont \ \textpm {#1}}}

\newtheorem{lemma}{Lemma}
\newtheorem{prop}{Proposition}

\newtheorem{assumption}{Assumption}

\newtheorem*{theorem*}{Theorem}

\usepackage{thmtools} 

\DeclareMathOperator*{\diag}{diag}

\DeclareMathOperator*{\argmin}{argmin}

\def\hvv{\widehat{\vv}}
\def\hvx{\widehat{\vx}}
\def\hvs{\widehat{\vs}}

\newcommand{\norm}[1]{\left\lVert#1\right\rVert_2}
\newcommand{\infnorm}[1]{\left\lVert#1\right\rVert_{\infty}}

\newcommand{\relu}[1]{\left[ #1 \right]_+}

\def\clip{\mathrm{clip}}

\definecolor{Blue1}{rgb}{0.3,0,1}

\begin{document}

\title{Scalable Self-Supervised Learning for \\ Multiphase AC-OPF in Distribution Systems with Topology Reconfiguration
}
\author{Hoang T. Nguyen$^1$, Shaohui Liu$^1$,
Reetam Sen Biswas$^2$, Varsha Pendyala$^2$, Nurali Virani$^2$,
Deepjyoti Deka$^1$, and Priya L. Donti$^1$%
\thanks{ 
	$^1$ Massachusetts Institute of Technology, Cambridge, MA, USA.}
\thanks{$^2$ GE Vernova Advanced Research Center, Niskayuna, NY, USA.}
}

\maketitle

\begin{abstract}
The proliferation of distributed energy resources (DERs) in distribution grids enables the active coordination of these assets to reduce costs and enable cleaner operations.
Realizing this potential requires solving multiphase AC optimal power flow (AC-OPF) quickly across varying loads, DER availabilities, and topology reconfigurations, at much greater speed and scale than conventional nonlinear solvers.
Learning-based surrogates can offer millisecond inference, yet existing methods target largely balanced transmission systems and do not scale to the multiphase, unbalanced, and reconfigurable nature of distribution feeders at utility scale. We present the Penalty + Sequential Linearized Feasibility Seeking (SLFS) algorithm, a self-supervised learning framework for multiphase distribution AC-OPF under switch-induced topology changes.
Penalty+SLFS requires no labeled optimal solutions and trains directly from the AC-OPF objective and constraints through a differentiable fixed-point power flow solver, avoiding expensive label generation and admitting robust training procedures. Topology changes are handled efficiently using Sherman–Morrison–Woodbury updates of the admittance-matrix inverse, while an $M$-step Jacobian approximation accelerates differentiation through the power flow solver. At inference, SLFS repairs any infeasible predictions, providing feasibility guarantees with low computational overhead.
On IEEE feeders ranging from 13 to 8{,}500 nodes, Penalty+SLFS achieves negligible optimality gaps and near-zero constraint violations, delivers up to three orders of magnitude speedups over IPOPT, and remains robust under large distributional shifts, demonstrating a viable path toward real-time, topology-aware AC-OPF for large-scale distribution grids.
\end{abstract}

\section{Introduction}

Distribution grids are evolving into actively controlled cyber--physical systems.
The growing penetration of inverter-interfaced distributed energy resources (DERs) and electrified loads intensifies variability and tightens operational limits such as voltage and current constraints~\cite{licari2025review}.
These challenges are exacerbated in distribution feeders, 
which are often multiphase and unbalanced, include switches inducing topology changes, and require operators to evaluate many load/DER/switch scenarios in near real time \cite{bazrafshan2017comprehensive,hao2024safe}.
AC optimal power flow (AC-OPF) provides a structured approach to coordinate distribution assets and minimize constraint violations.
Unfortunately, AC-OPF is inherently nonlinear and nonconvex, making it difficult to solve reliably and quickly in large-scale systems. 
These long solve times pose a bottleneck for real-time operations, look-ahead analysis, and interconnection studies.

Learning-based approaches can accelerate OPF by, e.g., learning warm starts~\cite{baker2019learning}, predicting active constraints~\cite{xavier2021learning}, learning iterative steps of a solver~\cite{ajeyemi2025learning}, or learning direct solution mappings~\cite{huang2021deepopf,dong2020smart,dontidc3,nguyen2025fsnet}.
Among these, the latter approach offers the potential for orders-of-magnitude speedups by directly mapping from operating conditions to OPF solutions,
often in milliseconds at inference time.
For instance, learning-based approaches for transmission AC-OPF are able to produce high-quality solutions two to three orders of magnitude faster than
traditional solvers~\cite{huang2021deepopf,dontidc3};
however, these gains have not yet been translated to distribution feeders, motivating our work.

The literature has explored two main approaches for learning direct solution mappings:
supervised learning and self-supervised learning.
Supervised learning approaches learn mappings by using offline-generated optimal solutions as labels within a regression error loss function~\cite{karagiannopoulos2019data,torre2022decentralized,singh2020learning}. 
While supervised training is typically straightforward, it requires labeled OPF solutions generated via successful AC-OPF solves; such labels can be expensive to obtain at scale and generally represent only feasible points (i.e., infeasible samples
are generally eliminated from training).
While dataset pipelines such as OPF-Learn~\cite{joswig2022opf} tighten sampling around the feasible region and increase the yield of usable labels, data generation can still take hours to days and scale poorly with grid size~\cite{klamkin2025PGLearn,park2024self,joswig2022opf}.
Recent grid foundation models follow this supervised, labeled-data route at transmission scale: IBM GridFM/GENCO unifies power flow, AC-OPF, and state estimation in one architecture~\cite{hamann2024foundation,puech2026genco}, Microsoft GridSFM predicts AC-OPF solutions across multiple transmission topologies~\cite{yang2026gridsfm}, and LUMINA evaluates multi-topology pretraining and transfer for AC-OPF surrogates~\cite{li2026lumina}.
Self-supervised approaches avoid this labeled-data bottleneck by training directly against the optimization structure, i.e., the objective and constraint violations~\cite{dontidc3,nguyen2025fsnet,pan2022deepopf,huang2021deepopf}. 
While this sometimes comes at the cost of training stability~\cite{nguyen2026cheap}, nonetheless, 
self-supervised methods for transmission AC-OPF and SCOPF
have been shown to match or exceed supervised surrogates while avoiding large labeled datasets~\cite{park2024self,pareek2025optimization,anrrango2026self}.

A key challenge for both self-supervised and supervised methods is feasibility enforcement, i.e., raw neural-network predictions frequently violate physical constraints.
To address this, many works penalize constraint violations in the loss via Lagrange multipliers~\cite{gupta2022dnn, park2023self,kim2023self,bouchkati2024augmented,li2026lumina}, which can improve reliability relative to pure supervised objectives, especially under distributional shift~\cite{li2026lumina}.
However, while these methods encourage constraint satisfaction, they do not strictly enforce it, so trained models may still return infeasible solutions~\cite{dontidc3}.
An alternative is to include hard feasibility enforcement components within training and inference, correcting raw predictions into feasible solutions~\cite{dontidc3,nguyen2025fsnet,pan2022deepopf}.
Although such methods can enforce feasibility, 
they are often substantially more expensive than pure penalty-based methods due to the cost of the enforcement process and the need to differentiate through it during training.

Distribution AC-OPF imposes three additional requirements that existing surrogates rarely satisfy simultaneously:~handling multiphase unbalanced physics, validity across discrete switch configurations, and feeders with thousands of nodes.
Switch reconfiguration is routine in distribution operations for loss minimization and service restoration, and a separate literature optimizes the switch positions themselves~\cite{hao2024safe,authier2024graphyr,qin2025physics}.
Because each configuration alters the nodal admittance matrix and the power flows, a surrogate must condition on the switch status.
Multiphase distribution learning has targeted power flow rather than OPF~\cite{ghamizi2025powerflowmultinet}, while supervised distribution OPF surrogates generally either assume a fixed topology and/or inherit the labeled-data cost discussed above~\cite{mahto2024gat}.

Motivated by these gaps, we contribute the following:

\begin{enumerate}
	\item \underline{Penalty+SLFS approach:} We introduce a self-supervised surrogate for multiphase distribution AC-OPF that includes an efficient differentiable power-flow solver in its architecture, trains with a penalty-based loss, and employs a novel feasibility-seeking procedure at inference. Penalty+SLFS scales to large unbalanced feeders with switch-induced topology reconfiguration and remains accurate and robust under shifts in loads and DERs.

	\item \underline{Scalable feasibility enforcement:} We propose a sequential linearized feasibility-seeking (SLFS) algorithm that iteratively refines setpoints via successive linearizations and exploits distribution network structure to enforce operational limits at inference time. The algorithm uses only matrix--vector operations and is GPU-friendly. Under standard assumptions, we prove that constraint violations decrease monotonically using SLFS. 
	\item \underline{Efficient training:} 
    We improve training efficiency via two innovations. (i) We use an $M$-step Jacobian approximation for unrolled differentiation through the fixed-point power flow solver, which improves speed and  memory-efficiency compared to implicit differentiation. We provide approximation error bounds under standard assumptions.
	(ii) To efficiently handle topology reconfiguration, we use the Sherman--Morrison--Woodbury (SMW) identity, with iterative refinement for ill-conditioned cases, to efficiently update the inverse admittance matrix across switch configurations in training and inference.
	\item \underline{Robust training:}
    Unlike supervised pipelines that rely on specialized feasible-region sampling, our method trains on both feasible and infeasible instances, 
    improving generalization and robustness under distributional shift.
    This is possible because training requires only input samples---not labeled OPF solutions---and thus data are inexpensive to create and need not be limited to feasible points. 
	\item \underline{Large-scale validation:} We validate Penalty+SLFS on multiphase distribution systems with topology reconfiguration across grids ranging in size from 13 to 8{,}500 nodes. Penalty+SLFS provides scalable, high-quality solutions with inference speedups of up to two to three orders of magnitude over IPOPT on medium-to-large feeders.
\end{enumerate}

\begin{table}[t]
	\vspace{-2.5em}
	\caption{Nomenclature}
    \vspace{-0.5em}
	\label{tab:nomenclature}
	\scriptsize
	\centering
	\begin{tabular}{ll}
	\hline
	\textbf{Symbol} & \textbf{Description} \\
	\hline
	$\mathcal{P} = \{1,\ldots,N_p\}$ & Non-slack phase nodes \\
	$\mathcal{E}, \mathcal{E}_{\text{sw}}$ & Static phase-level branches and switch branches \\
	$\mathcal{D}$ & Set of DER units \\
	$N_{\text{c}}$ & Number of DER unit--phase controls \\
	$n_k \in \mathcal{P}$ & Injection node of the $k$-th DER setpoint \\
	$\vz = (z_{mn})_{(m,n) \in \mathcal{E}_{\text{sw}}}$ & Switch status ($1$=closed, $0$=open) \\
	$\vv$ , $|\vv|^{\text{min/max}}$ & Complex voltages $(v_n)_{n \in \mathcal{P}}$ and magnitude bounds\\
	$\alpha_{mn}^{\text{max}}$ & Max.\ angle difference on $(m,n)$ \\
	$\vs^{\text{load}},\, \vs^{\text{der}}$ & Load/DER apparent-power injections \\
	$p_{j,\phi}^{\text{der}},\, q_{j,\phi}^{\text{der}}, s_{j,\phi}^{\text{der,max}}$ & Active/reactive/max. apparent power of DER unit $j$, phase $\phi$ \\
    $p_j^{\text{der}},\, p_j^{\text{der,avail}}$ & Aggregate and available active power of unit $j$ \\
	$\vp^\text{der},\, \vq^\text{der}$ & Stacked DER $P$/$Q$ setpoints (length $N_{\text{c}}$) \\
	$\vp^{\text{der,min/max}},\, \vq^{\text{der,min/max}}$ & Bounds on $\vp^{\text{der}}$, $\vq^{\text{der}}$ \\
	$s_{j,\phi}^{\text{der,max}}$ & Max.\ apparent power of unit $j$, phase $\phi$ \\
	$\vY(\vz)$ & Topology-dependent admittance matrix \\  
	$y_{mn}$ & Series admittance of branch $(m,n)$ \\
	$i_{mn},\, i_{mn}^{\text{max}}$ & Branch current and thermal rating on $(m,n)$ \\
	$\vs_0,\, \vs_0^{\text{max}}$ & Substation apparent power and its limit \\
    $\Phi_0$, $\Phi_j$ & Phases at the slack bus and of DER unit $j$ \\
	\hline
	\end{tabular}
	\vspace{-2.0em}
\end{table}

\noindent\textbf{Notation}: Bold lower-case (upper-case) letters denote vectors (matrices);	$(\cdot)^\top$ and $\diag(\cdot)$ denote transpose and diagonalization.
For a complex quantity, $\Re\{\cdot\}$, $\Im\{\cdot\}$, $\jmath$, and $\overline{(\cdot)}$ denote the real part, imaginary part, imaginary unit, and complex conjugate.
For $\vA \in \mathbb{C}^{n \times m}$, $\|\vA\|_p$ is the induced $p$-norm.
The map $\clip(\vx, \vl, \vu) = \min(\max(\vx, \vl), \vu)$ denotes element-wise clipping.
For $\vf:\mathbb{C}^n \to \mathbb{C}^m$, 
$J_{\vs}\vf(\vs)$ denotes the real 
representation of its Jacobian.

\vspace{-0.2em}

\section{AC-OPF Formulation}
We consider AC-OPF on a multiphase distribution feeder with switch-based topology reconfiguration.
The feeder connects to the main grid at a substation (slack) bus with fixed voltage $\vv_0$; all other phase nodes are PQ buses with loads and controllable DERs.
The nodal admittance matrix depends on the switch-status vector $\vz$ as follows \cite{bazrafshan2017comprehensive}:
\begin{align*}
	\vY(\vz) = \begin{bmatrix}
		\vY_{LL}(\vz) & \vY_{L0}(\vz) \\
		\vY_{0L}(\vz) & \vY_{00} \\
	\end{bmatrix}.
\end{align*}
Specifically, the AC-OPF problem is formulated as:
\begin{subequations}
	\begin{align}
		& \hspace{-0.6cm} \min_{\vp^\text{der}, \vq^\text{der}, \vv} \,\,  w_1 C_1(\vs_0, \vp^\text{der}) + w_2 C_2(\vp^\text{der}, \vp^\text{der,avail}) \label{eq:obj} \\
		\st \,\, 
		& \vi = \vY_{L0}(\vz) \vv_0 + \vY_{LL}(\vz) \vv, \label{eq:vi_mat}\\
		& \vs^{\text{der}} - \vs^{\text{load}} = \diag(\vv) \overline{\vi}, \label{eq:power_balance}\\
		& \vs_0 = \diag(\vv_0) (\overline{\vY}_{00} \overline{\vv}_0 + \overline{\vY}_{0L}(\vz) \overline{\vv}), \label{eq:vs0_mat}\\
		& |s_{0,\phi}| \leq s_{0,\phi}^{\text{max}}, \phi \in \Phi_0, \label{eq:s0_lim}\\
		& |\vv|^{\text{min}} \leq |\vv| \leq |\vv|^{\text{max}}, \label{eq:v_lim}\\
		& \vp^{\text{der,min}} \leq \vp^{\text{der}} \leq \vp^{\text{der,max}}, \label{eq:p_der_lim}\\
		& \vq^{\text{der,min}} \leq \vq^{\text{der}} \leq \vq^{\text{der,max}}, \label{eq:q_der_lim}\\
		& \sum_{\phi \in \Phi_j} p_{j,\phi}^{\text{der}} \leq p_j^{\text{der,avail}}, \quad \forall j \in \cD, \label{eq:p_der_avail}\\
		& (p_{j,\phi}^{\text{der}})^2 + (q_{j,\phi}^{\text{der}})^2 \leq (s_{j,\phi}^{\text{der,max}})^2, \; \forall j \in \cD, \phi \in \Phi_j, \label{eq:s_der_lim}\\
		& i_{mn} = y_{mn}\,(v_m - v_n), \quad \forall (m,n) \in \cE, \label{eq:i_def}\\
		& |i_{mn}| \leq i_{mn}^{\text{max}}, \quad \forall (m,n) \in \cE, \label{eq:i_lim}\\
		& |\Im\{\overline{v}_m v_n\}| \leq \tan(\alpha_{mn}^{\text{max}}) \Re\{\overline{v}_m v_n\}, \, \forall (m,n) \in \cE, \label{eq:angle_lim}
	\end{align}
	\label{prob:ac-opf}
\end{subequations}
where the operation cost is defined as $C_1 (\vs_0, \vp^\text{der}) = c_1 p_0^2 + c_2 p_0 + c_3 + \sum_{j \in \cD} (c_{1,j}^\text{der} (p_j^\text{der})^2 + c_{2,j}^\text{der} p_j^\text{der} + c_{3,j}^\text{der} )$ with $p_0 = \sum_{\phi \in \Phi_0} \Re\{s_{0,\phi}\}$ and $p_j^{\text{der}} = \sum_{\phi \in \Phi_j} p_{j,\phi}^{\text{der}}$, and curtailment cost is defined as $C_2(\vp^\text{der}, \vp^\text{der,avail}) = \sum_{j \in \cD} c_{4,j}^\text{der} (p_j^\text{der,avail} - p_j^\text{der})$, with positive cost coefficients $c_1, c_2, c_3, c_{1,j}^\text{der}, c_{2,j}^\text{der}, c_{3,j}^\text{der}, c_{4,j}^\text{der}$.
	
Constraint \eqref{eq:vi_mat} enforces Kirchhoff's current law at all phase nodes $n \in \cP$, and \eqref{eq:power_balance} enforces complex power balance at each PQ node; \eqref{eq:vs0_mat}--\eqref{eq:s0_lim} define and limit the per-phase substation injection; \eqref{eq:v_lim} bounds nodal voltage magnitudes; \eqref{eq:p_der_lim}--\eqref{eq:s_der_lim} constrain real/reactive setpoints, available power, and inverter capability at each DER; \eqref{eq:i_def}--\eqref{eq:i_lim} define branch currents and impose thermal limits on static branches $(m,n) \in \cE$; and \eqref{eq:angle_lim} bounds voltage-angle differences across $\cE$.

\section{Proposed Method}

\subsection{Method Overview}\label{sec:method_overview}
\begin{figure}
	\centering
	\includegraphics[width=0.98\linewidth]{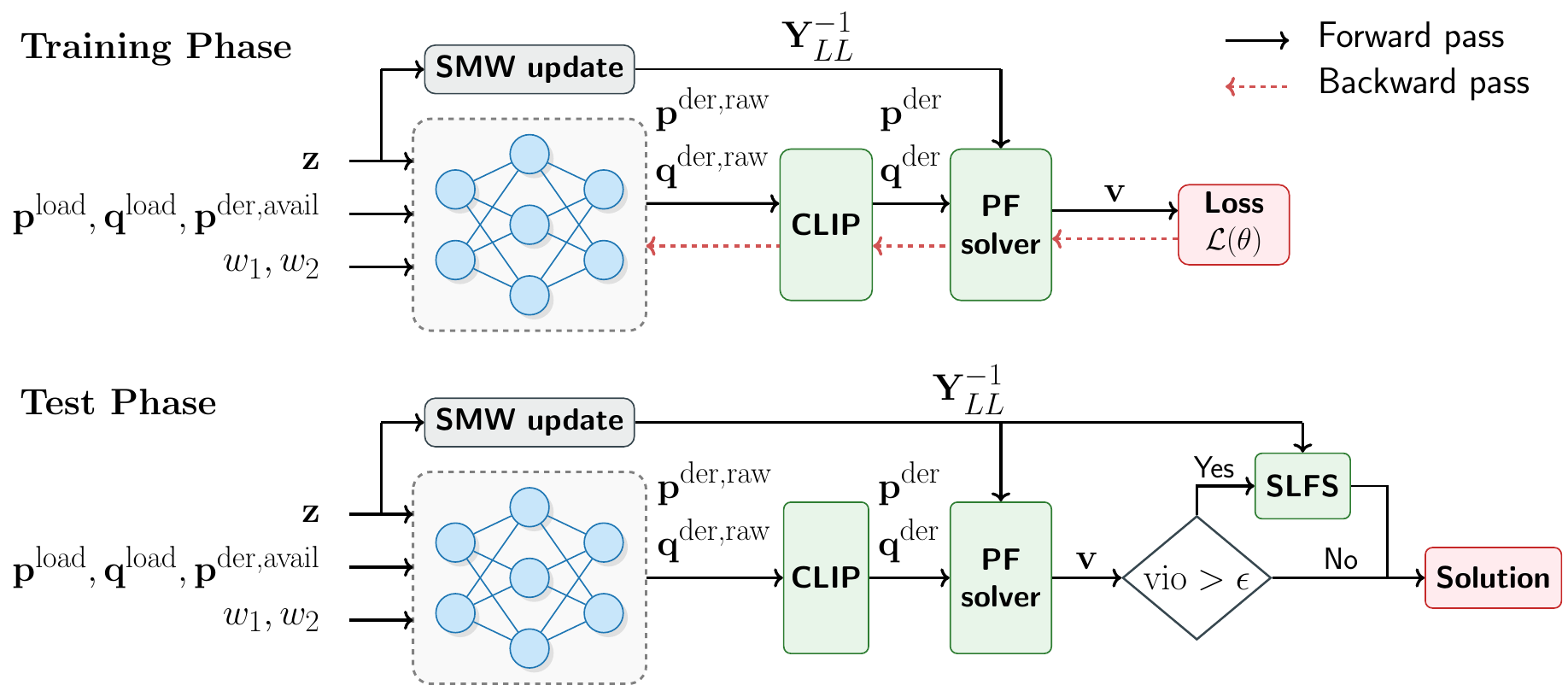}
    \setlength{\abovecaptionskip}{-2pt}
	\caption{Overview of the proposed training and inference pipelines.}
	\label{fig:penalty_fs_method}
    \vspace{-0.2in}
\end{figure}

We present Penalty+SLFS, a self-supervised
surrogate for multiphase distribution AC-OPF, depicted by 
Fig.~\ref{fig:penalty_fs_method}.
Given loads, DER availability, switch configuration $\vz$, and objective weights as input, the neural network outputs raw DER setpoints $(\vp^{\text{der,raw}}, \vq^{\text{der,raw}})$.
A clip layer (Sec.~\ref{sec:clip_layer}) then projects these outputs onto the DER operational limits to obtain the physically feasible setpoints $(\vp^{\text{der}}, \vq^{\text{der}})$.
A differentiable fixed-point power flow solver (Sec.~\ref{sec:power_flow_solver}) computes the corresponding voltages $\vv$, which are used to evaluate the AC-OPF objective and the remaining inequality constraints directly, without requiring labeled optimal solutions from a traditional AC-OPF solver.
The training loss combines this AC-OPF objective value with a penalty on the constraint violations, and the resulting gradient is backpropagated through the power flow solver and the neural network to update the network parameters in a self-supervised manner (Sec.~\ref{sec:loss}).
Because a change in switch configuration alters $\vY_{LL}(\vz)$ and its inverse, we update $\vY_{LL}^{-1}(\vz)$ efficiently using a SMW formula (Sec.~\ref{sec:smw}) rather than recomputing it from scratch.

At inference time, the same forward pass is used, except that sequential linearized feasibility seeking (SLFS, Sec.~\ref{sec:slfs}) is additionally applied to correct the predicted setpoints whenever the total constraint violation exceeds a threshold $\epsilon$.

\subsection{Clip Layer}\label{sec:clip_layer}
Let $\vx^\text{raw} \coloneq \bigl((\vp^{\text{der,raw}})^\top, (\vq^{\text{der,raw}})^\top\bigr)^\top$ denote the vector of raw DER setpoints produced by the neural network, and let $\vl_\vx \coloneq \bigl((\vp^{\text{der,min}})^\top, (\vq^{\text{der,min}})^\top\bigr)^\top$ and $\vu_\vx \coloneq \bigl((\vp^{\text{der,max}})^\top, (\vq^{\text{der,max}})^\top\bigr)^\top$ denote the corresponding box bounds.
The clip layer $\text{CLIP}(\vx^\text{raw})$ maps $\vx^\text{raw}$ to a setpoint $\vx \coloneqq \bigl((\vp^{\text{der}})^\top, (\vq^{\text{der}})^\top\bigr)^\top$ satisfying the DER limits \eqref{eq:p_der_lim}--\eqref{eq:s_der_lim} through the three sequential steps in Algorithm~\ref{alg:clip_layer}: 
(i) clipping to the box bounds $\vl_\vx, \vu_\vx$; 
(ii) uniformly scaling each unit's per-phase active power so that the aggregate active power $p_j^{\text{der}} = \sum_{\phi \in \Phi_j} p_{j,\phi}^{\text{der}}$ does not exceed the available power $p_j^{\text{der,avail}}$;
(iii) clipping each unit's reactive power so that the resulting setpoint satisfies the apparent-power capability \eqref{eq:s_der_lim}.

\begin{figure}[t]
\vspace{-0.15in}
\begin{algorithm}[H]
\caption{Clip layer $\text{CLIP}(\vx^\text{raw})$}
\label{alg:clip_layer}
\footnotesize
\begin{algorithmic}[1]
	\State $\bigl(\vp^{\text{der}}, \vq^{\text{der}}\bigr) \leftarrow \clip(\vx^\text{raw}, \vl_\vx, \vu_\vx)$ \Comment{{\scriptsize box limits}}
	\For{$j \in \cD$}
		\State $p_j^{\text{der}} \leftarrow \sum_{\phi \in \Phi_j} p_{j,\phi}^{\text{der}}$
		\State $\gamma_j \leftarrow \min\!\bigl(1,\, p_j^{\text{der,avail}}/p_j^{\text{der}}\bigr)$ if $p_j^{\text{der}} > 0$, else $1$
		\State $p_{j,\phi}^{\text{der}} \leftarrow \gamma_j\, p_{j,\phi}^{\text{der}}, \quad \forall \phi \in \Phi_j$ \Comment{{\scriptsize available-power scaling}}
		\State $q_{j,\phi}^{\text{cap}} \leftarrow \sqrt{(s_{j,\phi}^{\text{der,max}})^2 - (p_{j,\phi}^{\text{der}})^2}$
		\State $q_{j,\phi}^{\text{der}} \leftarrow \clip\bigl(q_{j,\phi}^{\text{der}},\, -q_{j,\phi}^{\text{cap}},\, q_{j,\phi}^{\text{cap}}\bigr), \quad \forall \phi \in \Phi_j$
		\hfill \Comment{{\scriptsize apparent-power limit}}
	\EndFor
	\State \Return $(\vp^{\text{der}}, \vq^{\text{der}})$
\end{algorithmic}
\end{algorithm}
\vspace{-0.3in}
\end{figure}

\subsection{Fixed-point based Power Flow Solver and Differentiation}\label{sec:power_flow_solver}

After the clip layer, evaluating the AC-OPF objective and the remaining inequality constraints (substation, voltage, and line limits) requires obtaining the bus voltages induced by the predicted setpoints.
As in other self-supervised OPF surrogates~\cite{dontidc3,pan2022deepopf}, we therefore embed a power flow solver in the forward pass and use its outputs to compute constraint violations that enter the training loss.
For the forward solve, we use fixed-point iteration, which is often faster than Newton--Raphson on distribution networks~\cite{dugan2011Open}.
The power flow equations \eqref{eq:vi_mat}--\eqref{eq:power_balance} can be written in the fixed-point form:
\begin{align}\label{eq:fixed-point}
    \vv = \vw(\vz) + \vY^{-1}_{LL}(\vz) \diag \left(\overline{\vv} \right)^{-1} \overline{\vs} =: \vG(\vv, \vs; \vz),
\end{align}
where $\vw(\vz) \coloneqq -\vY_{LL}^{-1}(\vz)\vY_{L0}(\vz)\vv_0$ is the no-load voltage vector and and $\vs = \vs^{\text{der}} - \vs^{\text{load}}$. 
Starting with an initial guess $\vv_{0}$, we iteratively update
\begin{align}\label{eq:fp_iter}
    \vv_{k+1} = \vG(\vv_{k}, \vs; \vz) 
\end{align}
until $\|\vv_{k+1} - \vv_{k}\|_\infty$ is below a predefined threshold. The convergence, uniqueness, and existence of the solution of \eqref{eq:fp_iter} have been well-established in \cite{wang2018Explicit,bernstein2018Loadb,bazrafshan2018Convergence}.

To backpropagate through the solver, either implicit or unrolled differentiation can be used~\cite{mandi2024DecisionFocused}.
Implicit differentiation requires a linear solve in the Jacobian of $\vG$ at the fixed point; because $\vG$ is not holomorphic, this system must be split into real and imaginary parts, doubling its size and becoming costly at scale.
We therefore unroll the iterations and truncate the backward pass.
Inspired by the one-step differentiation method in \cite{bolte2023one}, our $M$-step Jacobian approximation differentiates only the last $M$ iterations of \eqref{eq:fp_iter} via automatic differentiation, which is easy to implement and sufficiently accurate for training.

Formally, we define the $k$-fold composition of the power flow mapping $\psi_k(\vv,\vs) \coloneqq \vG^{\circ k}(\vv, \vs)$.
For a given power injection $\vs$, let $\vv_{j}(\vs)$ denote the voltage obtained after $j$ iterations of \eqref{eq:fp_iter} and $\vv_\star(\vs)$ denote the corresponding fixed point. The $M$-step approximate Jacobian is defined as
\begin{align*}
    J^{(M)}_\vs \vv_{k}(\vs) =&\; J_\vs \psi_M(\vv_{k-M}, \vs). 
\end{align*}
The approximation is obtained by dropping the sensitivity with respect to $\vv$ in the chain rule and treating $\vv_{k-M}(\vs)$ as constant.

To characterize the error of the $M$-step Jacobian approximation, we make the following assumption.
\begin{assumption}\label{as:operating_domain}
Let $\vR(\vz) \coloneqq \vY_{LL}^{-1}(\vz)$.
Define the operating domain
\begin{align*}
    \Omega 
    = \{ (\vv, \vs) : \|\vs\|_\infty \leq s^\text{max}, \ \min_i |v_i| \geq v^\text{min} \},
\end{align*}
where we assume that $v^\text{min} > 0$ and $s^\text{max} < \infty$. 
For each admissible switch configuration $\vz$, assume further that
\[
    \frac{\|\vR(\vz)\|_\infty \, s^\text{max}}{(v^\text{min})^2} < 1.
\]
\end{assumption}
\noindent The specified voltage and power bounds hold under normal operating conditions.
The final inequality is then readily computable and ensures that the fixed-point mapping $\vG(\cdot, \vs; \vz)$ is a contraction for all $\vs$ in the specified range~\cite{bernstein2018Loadb,bazrafshan2018Convergence,wang2018Explicit}. 

Under this assumption, $\vG$ and its Jacobian satisfy the following local Lipschitz bounds, which we use to establish the approximation error of the $M$-step Jacobian.
\begin{restatable}[Local Lipschitz continuity]{lemma}{Continuity}\label{lem:continuity}
    Suppose Assumption~\ref{as:operating_domain} holds. Define 
    \begin{align*}
        \beta \coloneqq \frac{\infnorm{\vR(\vz)} s^\text{max}}{(v^\text{min})^2}, 
        L_\vs \coloneqq \frac{\infnorm{\vR(\vz)}}{v^\text{min}},
        L_\vv \coloneqq \frac{2 \infnorm{\vR(\vz)} s^\text{max}}{(v^\text{min})^3}.
    \end{align*}
    Then, for all $(\vv, \vs), (\vv_1, \vs), (\vv_2, \vs) \in \Omega$, the following hold:
    \begin{align*}
        &\infnorm{J_\vv \vG(\vv, \vs; \vz)} \leq \beta, \quad \infnorm{J_\vs \vG(\vv, \vs; \vz)} \leq L_\vs, \\
        & \infnorm{J_\vv \vG(\vv_1, \vs; \vz) - J_\vv \vG(\vv_2, \vs; \vz)} \leq L_\vv \infnorm{\vv_1 - \vv_2}.
    \end{align*}
\end{restatable}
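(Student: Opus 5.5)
The plan is to compute the three Jacobians explicitly from the fixed-point map \eqref{eq:fixed-point}, write $\vG(\vv,\vs;\vz) = \vw(\vz) + \vR\,\vh(\vv,\vs)$ with $h_i(\vv,\vs) = \overline{s}_i/\overline{v}_i$, and bound each Jacobian row by row. Since $\vw$ does not depend on $(\vv,\vs)$, every Jacobian of $\vG$ equals $\vR$ times the corresponding Jacobian of $\vh$. The map $\vh$ acts componentwise, so its Jacobians are diagonal in the node index. Concretely, $\partial h_i/\partial \overline{v}_i = -\overline{s}_i/\overline{v}_i^{2}$ and $\partial h_i/\partial \overline{s}_i = 1/\overline{v}_i$, and $h_i$ does not depend on $v_i$ or $s_i$ themselves (it is antiholomorphic in each). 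Hence, as an operator acting on complex perturbations, $d\vG = \vR\,\diag(-\overline{\vs}/\overline{\vv}^{2})\,\overline{d\vv}$ and $d\vG = \vR\,\diag(1/\overline{\vv})\,\overline{d\vs}$. I will interpret the induced $\infty$-norm of the real representation $J$ through these complex linear-plus-conjugation maps: $\|J\|_\infty$ is taken as $\sup \|d\vG\|_\infty / \|d\vv\|_\infty$ with moduli of complex entries. Conjugation preserves moduli, so this norm is at most $\|\vR\|_\infty\,\|\diag(\cdot)\|_\infty$.

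The first two bounds then follow directly from the definition of $\Omega$. For the $\vv$-Jacobian, $\max_i |s_i|/|v_i|^2 \le s^{\max}/(v^{\min})^2$, which gives $\|J_\vv\vG\|_\infty \le \|\vR\|_\infty s^{\max}/(v^{\min})^2 = \beta$. For the $\vs$-Jacobian, $\max_i 1/|v_i| \le 1/v^{\min}$, which gives $L_\vs$.

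For the Lipschitz bound, I write
\[
J_\vv\vG(\vv_1,\vs) - J_\vv\vG(\vv_2,\vs) = \vR\,\diag\bigl(\overline{s}_i(\overline{v}_{2,i}^{-2} - \overline{v}_{1,i}^{-2})\bigr)\circ\overline{(\cdot)},
\]
and use the factorization
\[
|v_{2,i}^{-2} - v_{1,i}^{-2}| = \frac{|v_{1,i}-v_{2,i}|\,|v_{1,i}+v_{2,i}|}{|v_{1,i}|^2|v_{2,i}|^2}.
\]
This is the main obstacle. A naive bound on $|v_{1,i}+v_{2,i}|$ needs an upper voltage bound, which $\Omega$ does not supply. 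The remedy is to split the fraction as $|v_{1,i}-v_{2,i}|\bigl(|v_{1,i}|^{-1}|v_{2,i}|^{-2} + |v_{1,i}|^{-2}|v_{2,i}|^{-1}\bigr)$, which uses $|a+b| \le |a|+|b|$ and cancels one factor of each modulus. Each of the two terms is then at most $(v^{\min})^{-3}$, so the bracket is at most $2/(v^{\min})^3$. Multiplying by $|s_i| \le s^{\max}$ and $\|\vR\|_\infty$, and taking the maximum over $i$ with $|v_{1,i}-v_{2,i}| \le \|\vv_1-\vv_2\|_\infty$, yields $L_\vv\|\vv_1-\vv_2\|_\infty$.

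I will also remark that this argument never uses convexity of $\Omega$ (which fails for the constraint $|v_i|\ge v^{\min}$), because each bound is pointwise rather than obtained by integrating along a segment.
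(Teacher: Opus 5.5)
Your proposal is correct and follows essentially the same route as the paper: you write $d\vG$ as $\vR$ times a diagonal matrix applied to the conjugated perturbation, use submultiplicativity with $\infnorm{\overline{d\vv}}=\infnorm{d\vv}$, and split $|a^{-2}-b^{-2}|\le |a-b|\bigl(|a|^{-1}|b|^{-2}+|a|^{-2}|b|^{-1}\bigr)$ to avoid needing an upper voltage bound, which is exactly the paper's auxiliary inequality (its Lemma on $|1/a^2-1/b^2|\le 2|a-b|/c^3$). Your reading of $\infnorm{J}$ as the supremum of $\infnorm{d\vG}/\infnorm{d\vv}$ measured with complex moduli is also the interpretation the paper's proof uses.
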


Let $\Delta \vv_{k}(\vs) \coloneqq \vv_{k}(\vs) - \vv_\star(\vs)$ denote the discrepancy between the $k$-th iteration and the fixed point.

\begin{prop}[Approximation error] \label{prop:approx_error}
    Under Assumption~\ref{as:operating_domain} and for any integers 
    $0 \leq M \leq k$, the following bound holds:
    \begin{align*}
         &\infnorm{J^{(M)}_\vs \vv_{k}(\vs) - J_\vs \vv_\star(\vs)} \\
         &\qquad\qquad \leq  \dfrac{\beta^M}{1 - \beta} L_\vs + \sum_{j=1}^{M} L_j \beta^{j - 1} \infnorm{\Delta \vv_{k - j}(\vs)}
    \end{align*}
    where the constants $\{L_j\}_{j=1}^M$ are defined as
    \begin{align*}
        L_j :=
        \begin{cases}
        L_{\vs} + \dfrac{L_{\vv} L_{\vs}}{1-\beta}, & \text{if } j < M,\\
        L_{\vs}, & \text{if } j = M .
        \end{cases}
    \end{align*}
\end{prop}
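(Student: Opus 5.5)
The plan is to unroll both Jacobians through the chain rule and compare them term by term, using Lemma~\ref{lem:continuity} for every estimate. Write $A_j \coloneqq J_\vv \vG(\vv_{k-j}, \vs; \vz)$ and $B_j \coloneqq J_\vs \vG(\vv_{k-j}, \vs; \vz)$ for the Jacobians along the actual iterates $\vv_{k-j} = \vv_{k-j}(\vs)$. Write $A_\star \coloneqq J_\vv \vG(\vv_\star, \vs; \vz)$ and $B_\star \coloneqq J_\vs \vG(\vv_\star, \vs; \vz)$ at the fixed point.

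\textbf{Step 1: unrolled expressions.} Differentiating $\psi_M(\vv_{k-M}, \vs)$ in $\vs$ with $\vv_{k-M}$ held fixed gives
\begin{align*}
J^{(M)}_\vs \vv_k(\vs) = \sum_{j=1}^{M} A_1 A_2 \cdots A_{j-1} B_j ,
\end{align*}
where the empty product for $j=1$ is the identity. The fixed point satisfies $\vv_\star = \vG(\vv_\star, \vs)$, so the implicit function theorem gives $J_\vs \vv_\star = (I - A_\star)^{-1} B_\star$. Since $\|A_\star\|_\infty \le \beta < 1$, this Neumann series converges and can be written as
\begin{align*}
J_\vs \vv_\star = \sum_{j=1}^{M} A_\star^{j-1} B_\star + A_\star^{M} J_\vs \vv_\star .
\end{align*}
The tail is bounded by $\beta^M \|J_\vs\vv_\star\|_\infty \le \beta^M L_\vs/(1-\beta)$. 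This accounts for the first term of the bound.

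\textbf{Step 2: compare the finite sums.} It remains to bound
\begin{align*}
\Big\|\sum_{j=1}^M \big(A_1\cdots A_{j-1}B_j - A_\star^{j-1}B_\star\big)\Big\|_\infty .
\end{align*}
A direct telescoping of the products is possible, but it produces the wrong constants. Instead I use a recursive error formulation. Define $E_m$ as the error of the $m$-step approximation at iterate $k-M+m$, so that $E_M$ is the quantity of interest. Then
\begin{align*}
J^{(m)} = B_{(\cdot)} + A_{(\cdot)} J^{(m-1)}, \qquad J_\vs\vv_\star = B_\star + A_\star J_\vs\vv_\star .
\end{align*}
Subtracting these, the error at step $j$ from the top satisfies
\begin{align*}
E = (B_j - B_\star) + (A_j - A_\star)\,J_\vs\vv_\star + A_j\,E_{\text{prev}} .
\end{align*}
The middle term is bounded by $L_\vv \|\Delta\vv_{k-j}\|_\infty \cdot L_\vs/(1-\beta)$, using Lemma~\ref{lem:continuity} together with $\|J_\vs\vv_\star\|\le L_\vs/(1-\beta)$.

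The first term needs a Lipschitz bound on $J_\vs\vG$ in $\vv$. The lemma does not state one, so I would derive it directly. We have $J_\vs \vG = \vR\,\diag(\overline{\vv})^{-1}\cdot$(conjugation), and
\begin{align*}
\big\|\diag(\overline{\vv}_1)^{-1}-\diag(\overline{\vv}_2)^{-1}\big\| \le \frac{\|\vv_1-\vv_2\|}{(v^{\min})^2}.
\end{align*}
This gives a constant $\|\vR\|/(v^{\min})^2$. The stated $L_j$ instead has $L_\vs$ multiplying $\|\Delta\vv\|$, so the constant must be reconciled. It matches up to units if one uses $v^{\min}\le 1$ or absorbs the factor; I would verify the authors' convention, possibly via $\|\Delta\vv\|$ normalized by $v^{\min}$. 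This is the main bookkeeping obstacle.

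\textbf{Step 3: base case and induction.} At the deepest level $j=M$, the truncation drops $A_M J_\vs\vv_{k-M}$. Its contribution was already moved into the $\beta^M$ tail, so only the $(B_M - B_\star)$ term remains, which explains $L_M = L_\vs$ without the $L_\vv$ term. Unrolling the recursion multiplies the level-$j$ error by $\|A_1\cdots A_{j-1}\| \le \beta^{j-1}$. Summing over levels yields $\sum_j L_j\beta^{j-1}\|\Delta\vv_{k-j}\|_\infty$.

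Two points need care. All iterates must stay in $\Omega$, which follows from contraction under Assumption~\ref{as:operating_domain}. The exact constant in the $B_j - B_\star$ bound is the other step I expect to be delicate.
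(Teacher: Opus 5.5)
The structure of your argument is right, but the bound you flagged as a bookkeeping problem is a real gap. Your suggested fix goes the wrong way.

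\textbf{The constant cannot be reconciled.} Your Lipschitz modulus for $J_\vs\vG$ in $\vv$, namely $\infnorm{\vR}/(v^{\min})^2=L_\vs/v^{\min}$, is correct and sharp. It is at most $L_\vs$ only when $v^{\min}\ge 1$, not $v^{\min}\le 1$, and in per-unit practice $v^{\min}<1$. In fact the printed bound is false in general.
\begin{itemize}
\item \emph{Scaling.} Rescale $\vv\mapsto\lambda\vv$, $\vw\mapsto\lambda\vw$, $\vs\mapsto\lambda^2\vs$, $v^{\min}\mapsto\lambda v^{\min}$, $s^{\max}\mapsto\lambda^2 s^{\max}$, with $\vR$ fixed. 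The left-hand side and every other term on the right scale like $\lambda^{-1}$, but $L_\vs\infnorm{\Delta\vv_{k-j}(\vs)}$ is invariant.
\item \emph{Counterexample.} Take the scalar case $\vR=1$, $\vw=1/2$, $s=s^{\max}=\epsilon>0$ small, and $v^{\min}=1/2-3\epsilon$. Then $\beta\approx 4\epsilon$ and all iterates stay in $\Omega$. Use initial guess $v_0=-1/2$ and $k=M=1$. Here $J^{(1)}_\vs v_1$ is $\delta s\mapsto -2\overline{\delta s}$, while $J_\vs v_\star$ is $\delta s\mapsto 2\overline{\delta s}+O(\epsilon)$. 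So the left side is $\approx 4$, but the right side is $\tfrac{\beta}{1-\beta}L_\vs+L_\vs|v_0-v_\star|\approx 2$.
\end{itemize}
The paper's own proof has the same hole. When bounding $\vT_1$, it asserts $\infnorm{\vB_\star-\vB_t}\le L_\vs\infnorm{\Delta\vv_t(\vs)}$ ``by Lemma~\ref{lem:continuity}''. That lemma only bounds $\infnorm{J_\vs\vG}$, not its variation in $\vv$. What your argument actually proves is the proposition with $L_\vs$ replaced by $L_\vs/v^{\min}$ in both branches of $L_j$. The tail term and the $L_\vv L_\vs/(1-\beta)$ coefficient are unaffected. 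That corrected version is what you should state and prove.

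\textbf{Comparison with the paper's route.} Apart from the constant, your route is the paper's in a different form.
\begin{itemize}
\item The paper gets the tail from Lemma~\ref{le:matrix_ineq} applied to $\Phi_M$. There $(\vI-\vA)^{-1}\vB-\vB$ is exactly your $A_\star^M J_\vs\vv_\star$.
\item It then bounds the truncated-sum difference through $\vT_1$ (the $B$-differences) and $\vT_2$ (the product differences). $\vT_2$ is handled by the telescoping identity of Lemma~\ref{le:product_identity}, a swap of sums, and a geometric series.
\item Your one-step recursion reaches the same constants more transparently. Contrary to your remark, the telescoping does \emph{not} give wrong constants; it yields exactly $L_\vs+L_\vv L_\vs/(1-\beta)$ for $j<M$.
\end{itemize}

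\textbf{Two bookkeeping points.}
\begin{itemize}
\item \emph{Do not count the tail twice.} If you recurse on the full error, then $E_0=-J_\vs\vv_\star$ and the tail reappears at the bottom as $-A_1\cdots A_{M-1}A_\star J_\vs\vv_\star$. If instead you recurse on the truncated-sum difference left after Step 1, the middle term at level $j$ should carry the partial sum $\sum_{i=0}^{M-j-1}A_\star^{i}B_\star$ rather than $J_\vs\vv_\star$. That sum has the same bound $L_\vs/(1-\beta)$ and vanishes at $j=M$. This is the clean reason $L_M$ has no $L_\vv$ term.
\item \emph{Iterates in $\Omega$.} That the iterates stay in $\Omega$ does not follow from the Jacobian bound alone, since $\Omega$ need not be invariant under $\vG(\cdot,\vs)$. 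It is an implicit hypothesis, in the paper as in your proof.
\end{itemize}
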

For sufficiently large $k$, such that $\vv_k(\vs)$ is close to the fixed point $\vv_\star(\vs)$, the second term in the bound of Proposition~\ref{prop:approx_error} becomes negligible. Thus, the approximation error of the $M$-step Jacobian approximation decays exponentially with $M$.

\subsection{Admittance Inverse via SMW}\label{sec:smw}
When the switch configuration changes, the admittance matrix and its inverse need to be recomputed. However, directly computing a dense inverse at every topology sample is costly when $|\cP|$ is large.
We therefore choose a fixed configuration $\vz^0$, and update $\vY_{LL}^{-1}(\vz)$ from the reference using a SMW formula, which we show empirically to be substantially faster than direct inversion as the grid size grows (Sec.~\ref{sec:smw_results}).

With a reference switch vector $\vz^0$, we let $\vR_0 \coloneqq \vY_{LL}^{-1}(\vz^0) = \vR(\vz^0)$ and precompute it once offline.
Rather than rebuilding $\vY_{LL}(\vz)$ from scratch, we add or remove only the switch contributions that differ from $\vz^0$.
For each switchable branch $(m,n) \in \cE_{\text{sw}}$, define the signed status change
\begin{equation*}
\delta_{mn} \coloneqq z_{mn} - z_{mn}^0 \in \{-1,0,+1\},
\end{equation*}
where $\delta_{mn}=+1$ means $(m,n)$ closes relative to $\vz^0$, $\delta_{mn}=-1$ means it opens, and $\delta_{mn}=0$ means it is unchanged.

Let $\vS_{mn} \in \C^{|\cP| \times |\cP|}$ denote the contribution that branch $(m,n)$ adds to the load-block admittance matrix when it is closed ($z_{mn}=1$).
Let $\Phi_{mn}$ be the set of phases present on $(m,n)$, and let $n_{f,r}, n_{t,r} \in \cP$ denote the from-side and to-side phase-node indices for phase $r \in \Phi_{mn}$.
We build $\vS_{mn}$ from the switch $\pi$-model: for every pair of phases $(r,c) \in \Phi_{mn} \times \Phi_{mn}$, the series and shunt admittances $y^s_{rc}$ and $y^\sigma_{rc}$ are stamped into the four from/to blocks as
\begin{align*}
	[\vS_{mn}]_{n_{f,r}, n_{f,c}},\; [\vS_{mn}]_{n_{t,r}, n_{t,c}} &\mathrel{+}= y^s_{rc} + \tfrac{1}{2} y^\sigma_{rc}, \\
	[\vS_{mn}]_{n_{f,r}, n_{t,c}},\; [\vS_{mn}]_{n_{t,r}, n_{f,c}} &\mathrel{+}= -y^s_{rc},
\end{align*}
i.e., series and half-shunt terms on the from--from and to--to blocks, and $-y^s_{rc}$ on the cross blocks.
All other entries of $\vS_{mn}$ are zero, so $\vS_{mn}$ is supported only on the $2|\Phi_{mn}|$ rows and columns indexed by $\{n_{f,r}, n_{t,r}\}_{r \in \Phi_{mn}}$. Consequently, $\text{rank}(\vS_{mn}) \leq 2|\Phi_{mn}| \ll |\cP|$, which is what makes the Woodbury factorization below possible.
The load-block perturbation is then the weighted sum of these stamps:
\begin{equation*}
\Delta \vY_{LL}(\vz) = \sum_{(m,n) \in \cE_{\text{sw}}} \delta_{mn}\,\vS_{mn},
\end{equation*}
so $\vY_{LL}(\vz) = \vY_{LL}(\vz^0) + \Delta \vY_{LL}(\vz)$.
To apply the Woodbury update, each active term $\delta_{mn}\vS_{mn}$ is factored as $\vU_{mn}\vV_{mn}^\top$ with $\vU_{mn}, \vV_{mn} \in \C^{|\cP| \times 2|\Phi_{mn}|}$ nonzero only on the rows $\{n_{f,r}, n_{t,r}\}_{r \in \Phi_{mn}}$. The columns from all switches with $\delta_{mn}\neq 0$ are then stacked into $\vU(\vz), \vV(\vz) \in \C^{|\cP| \times n_{\text{upd}}}$, yielding
\begin{equation*}
\Delta \vY_{LL}(\vz) = \vU(\vz)\,\vV(\vz)^\top.
\end{equation*}
Here $n_{\text{upd}} = \sum_{(m,n):\,\delta_{mn}\neq 0} 2|\Phi_{mn}|$ is the total number of stacked factor columns and is typically much smaller than $|\cP|$, since only a few switches change per topology sample.

Because $\Delta \vY_{LL}(\vz)=\vU(\vz)\vV(\vz)^\top$ is low rank and $\vR_0$ is already available, we update the inverse with the SMW identity applied to $\vY_{LL}(\vz)=\vY_{LL}(\vz^0)+\vU(\vz)\vV(\vz)^\top$:
\begin{align}\label{eq:woodbury}
    \vY_{LL}(\vz)^{-1}
    = \vR_0
    - \vR_0\vU(\vz)
    \bigl(\vK(\vz)\bigr)^{-1}
    \vV(\vz)^\top \vR_0,
\end{align}
where only the small matrix
\begin{align*}
    \vK(\vz) \coloneqq \vI_{n_{\text{upd}}} + \vV(\vz)^\top \vR_0 \vU(\vz) \in \C^{n_{\text{upd}} \times n_{\text{upd}}}
\end{align*}
is inverted explicitly. The remaining work is matrix multiplication with the precomputed $\vR_0$.
Thus, each topology update costs $\cO(n_{\text{upd}}^3 + |\cP|^2 n_{\text{upd}})$, compared with $\cO(|\cP|^3)$ for a dense inversion of $\vY_{LL}(\vz)$.
When $n_{\text{upd}} \ll |\cP|$, this is a significant computational saving.

\textbf{Iterative refinement.}
Although \eqref{eq:woodbury} is exact analytically, when evaluated numerically, roundoff in the $n_{\text{upd}}\times n_{\text{upd}}$ solve for $\vK(\vz)^{-1}$ can be amplified by the subsequent multiplications with $\vR_0\vU(\vz)$ and $\vV(\vz)^\top\vR_0$, and the computed inverse may deviate from $\vY_{LL}(\vz)^{-1}$ \cite{hager1989updating}.
This effect is strongest when $\kappa(\vK(\vz))$ is large, e.g., under ill-conditioned switch reconfigurations with high-impedance tie switches or near-parallel paths where $\vU(\vz)$ and $\vV(\vz)$ have nearly dependent columns.

To correct for this issue without ever forming a dense inverse, let $\cW(\vz)[\cdot]$ denote the linear map given by the right-hand side of \eqref{eq:woodbury}, i.e., $\cW(\vz)[\vM] \coloneqq \vR_0\vM - \vR_0\vU(\vz)\bigl(\vK(\vz)\bigr)^{-1}\vV(\vz)^\top \vR_0\vM$.
Let $\vX_0 \coloneqq \cW(\vz)[\vI_{|\cP|}]$ be the SMW update of $\vY_{LL}(\vz)^{-1}$ from \eqref{eq:woodbury}.
When the update is ill-conditioned, we refine $\vX_0$ with one or two SMW iterative-refinement (SMW-IR) steps \cite{hashemi2025instability},
\begin{align*}
    \vX_{j+1} = \vX_j + \cW(\vz)\bigl[\vI_{|\cP|} - \vY_{LL}(\vz)\vX_j\bigr],
    \quad j = 0,1,2.
\end{align*}
The refined matrix $\vX_j$ is then used in place of $\vY_{LL}(\vz)^{-1}$ in the fixed-point power flow \eqref{eq:fixed-point}.

\subsection{Loss Function} \label{sec:loss}
After the clip layer and the power-flow solve, each training instance yields a predicted operating point from which we evaluate the AC-OPF cost and any remaining constraint violations.
We combine these into a penalty-based loss that drives the network toward low-cost, near-feasible solutions.

Let $\cB$ denote a training minibatch.
For each $b \in \cB$, let $\vx^{(b)}$ denote the DER setpoints produced by the clip layer for sample $b$, and let $\vv^{(b)}$ denote the corresponding voltage solution from the power flow solver.
We write $f(\vx^{(b)})$ for the objective \eqref{eq:obj} evaluated at this operating point, including the substation power implied by $\vv^{(b)}$, and $\vg(\vx^{(b)})$ for the stacked network operational constraints with associated bounds $\vl \le \vg(\vx) \le \vu$ (defined explicitly in Appendix~\ref{appendix:lin_constraints}), comprising the voltage, branch-current, angle-difference, and substation limits \eqref{eq:s0_lim}, \eqref{eq:v_lim}, \eqref{eq:i_lim}--\eqref{eq:angle_lim}.
Note that DER box, availability, and apparent-power limits \eqref{eq:p_der_lim}--\eqref{eq:s_der_lim} are enforced by the clip layer and are therefore excluded from $\vg(\cdot)$.
Power-flow equalities \eqref{eq:vi_mat}--\eqref{eq:power_balance} are imposed by the forward solver. 
The training loss is
\begin{align}
    \cL(\theta) &= \frac{1}{|\cB|} \sum_{b \in \cB} \Bigl( f(\vx^{(b)}) + \rho\, c(\vx^{(b)}) \Bigr), \label{eq:training_loss}\\
    c(\vx^{(b)}) &= \big\| [\vg(\vx^{(b)}) - \vu]_+ + [\vl - \vg(\vx^{(b)})]_+ \big\|_2^2,\nonumber
\end{align}
where $\rho > 0$ is a fixed weight, $[\cdot]_+ \coloneqq \max(\cdot,0)$ is applied elementwise, and $c(\vx^{(b)})$ is the penalty on inequality violations.

Prior work updates $\rho$ in a Lagrangian fashion during training~\cite{kim2023self, chen2022unsupervised}.
In our setting, however, training minibatches often contain infeasible instances with nonzero violations, so such updates tend to increase $\rho$ monotonically and skew the loss toward feasibility at the expense of cost. 
We therefore keep $\rho$ fixed, which also simplifies hyperparameter selection.

\subsection{Sequential Linearized Feasibility Seeking (SLFS)}\label{sec:slfs}
At inference, if residual violations of the network prediction exceed a tolerance $\epsilon$, SLFS iteratively repairs the DER setpoints (the bottom diagram of Fig.~\ref{fig:penalty_fs_method}).
Each iteration linearizes the nonlinear network constraints around the current operating point and solves a damped proximal feasibility subproblem on that linear model.
We derive the linearized model next, then present the algorithm and its convergence guarantees.

\subsubsection{Linearized model} \label{sec:linearized model}
Let $(\hvx, \hvv, \hvs)$ denote a reference operating point at the current SLFS iterate, with $\hvs = \hvs^\text{der} - \vs^\text{load}$ and loads $\vs^\text{load}$ held fixed. In the iteration described below, this reference point is the current iterate, $\hvx = \vx_k$.
Throughout this subsection, hat notation denotes quantities evaluated at the current reference $(\hvx,\hvv,\hvs)$. All linearization matrices below---including $\vB$, $\vA_v(\hvx)$, and the stacked sensitivity matrix $\vA(\hvx)$---are formed from this reference and recomputed whenever SLFS advances to a new iterate.

Recall the fixed-point relation \eqref{eq:fixed-point}, obtained by combining \eqref{eq:vi_mat} and \eqref{eq:power_balance}: $\vv = \vw(\vz) + \vY_{LL}^{-1}(\vz) \diag{(\overline{\vv})}^{-1} \overline{\vs}$, where $\vw(\vz)$ is the no-load voltage vector.
Following the fixed-point linearization in \cite{bernstein2018Loadb}, we freeze the voltage dependence in $\diag(\overline{\vv})^{-1}$ at the reference $\hvv$:
\begin{align*}
	\vv \approx \vw(\vz) + \vH\, \overline{\vs}, \qquad \vH \coloneqq \vY_{LL}^{-1}(\vz)\diag{(\overline{\hvv})}^{-1} \in \C^{N_p \times N_p}.
\end{align*}
Since $\hvv$ is itself the fixed point at the reference injection $\hvs$, it satisfies $\hvv = \vw(\vz) + \vH\overline{\hvs}$, i.e., $\vw(\vz) = \hvv - \vH\overline{\hvs}$. Substituting back gives the affine voltage approximation
\begin{align*}
	\vv \approx \hvv + \vH\,(\overline{\vs} - \overline{\hvs}).
\end{align*}
Because loads are fixed, only DER setpoints vary: a change $\delta p_k$ in the $k$-th active-power control injects $\delta p_k$ at phase node $n_k \in \cP$, and a change $\delta q_k$ in the corresponding reactive control changes $\overline{s_{n_k}}$ by $-\jmath\,\delta q_k$.
Stacking these local sensitivities yields the affine voltage model
\begin{align}
	\vv \approx \hvv + \vB (\vx - \hvx),
	\label{eq:lin_pf_simple}
\end{align}
where $\vB \coloneqq \begin{bmatrix} \vB_p & \vB_q \end{bmatrix} \in \C^{N_p \times 2N_{\text{c}}}$ has columns $(\vB_p)_{:,k} = \vH_{:,n_k}$ and $(\vB_q)_{:,k} = -\jmath\,\vH_{:,n_k}$ for $k = 1,\ldots,N_{\text{c}}$.

\textbf{Voltage magnitude.}
The voltage limits \eqref{eq:v_lim} depend on the per-node magnitudes $|v_n|$.
Differentiating $|v_n|$ with respect to the real decision vector $\vx$ and using \eqref{eq:lin_pf_simple} yields
\begin{align}
    |\vv| &\approx |\hvv| + \vA_v(\hvx) (\vx - \hvx), \label{eq:vmag_matrix} \\
    \vA_v(\hvx) &\coloneqq \diag(|\hvv|)^{-1} \Re\left\{\diag(\overline{\hvv})\vB \right\} \in \R^{N_p \times 2N_{\text{c}}}.
    \label{eq:Av_vmag}
\end{align}

The same magnitude and affine linearizations are applied to the substation-power, angle-difference, and branch-current limits \eqref{eq:s0_lim}, \eqref{eq:angle_lim}, and \eqref{eq:i_lim}; full derivations are given in Appendix~\ref{appendix:lin_constraints}.
In summary, the linearized network inequality limits are given by
\begin{align*}
	&\vl \leq \vA(\hvx) \vx + \vb(\hvx) \leq \vu, \\
	&\vl_\vx \leq \vx \leq \vu_\vx, \\
	&\eqref{eq:p_der_avail}, \eqref{eq:s_der_lim}.
\end{align*}
See Appendix~\ref{appendix:lin_constraints} for definitions of $\vA(\hvx)$, $\vb(\hvx)$, $\vl$, and $\vu$.

\subsubsection{Algorithm and Convergence Analysis}
The main idea of SLFS is to iteratively linearize the nonlinear constraints around the current operating point and solve a convex feasibility subproblem, repeating this for a fixed number of iterations.

We define the $\ell_\infty$-norm violation functions for the nonlinear and linearized constraints, respectively, as
\begin{align*}
    V(\vx)& = \infnorm{\relu{\vg(\vx) - \vu} + \relu{\vl - \vg(\vx)}},\\
    V^\text{lin} (\vx; \hvx)& = \|\relu{\vA(\hvx) \vx + \vb(\hvx) - \vu} \\
    &\qquad\;\;\; + \relu{\vl - \vA(\hvx) \vx - \vb(\hvx)} \|_\infty,
\end{align*}
and let the feasible set capturing the remaining DER-side constraints be
\begin{align*}
    \cF = \{ \vx : \vl_\vx \leq \vx \leq \vu_\vx, \eqref{eq:p_der_avail}, \eqref{eq:s_der_lim}\}.
\end{align*}
Note that $\cF$ collects exactly the box, availability, and apparent-power constraints \eqref{eq:p_der_lim}--\eqref{eq:s_der_lim} that $\text{CLIP}$ (Algorithm~\ref{alg:clip_layer}) enforces by construction. Hence, $\text{CLIP}$ maps any point into $\cF$, and we reuse it as the feasibility map onto $\cF$ inside Algorithm~\ref{alg:pdhg-fs}.

At iteration $k$, given the current iterate $\vx_k$, we solve the proximal feasibility problem
\begin{align}\label{eq:proximal}
    \tilde{\vx}_{k+1} = \argmin_{\vx \in \cF} V^\text{lin} (\vx; \vx_{k}) + \frac{\eta}{2} \norm{\vx - \vx_{k}}^2,
\end{align}
where $\eta > 0$. The proximal term $\frac{\eta}{2} \norm{\vx - \vx_{k}}^2$ regularizes the step and ensures the problem is well-defined. Here $V^\text{lin}(\cdot\,;\vx_k)$ is convex, being an $\ell_\infty$ norm of nonnegative $\relu{\cdot}$ applied to affine maps, and $\cF$ is convex. Adding the proximal term therefore makes the objective strongly convex with modulus $\eta$, so \eqref{eq:proximal} has a unique solution.

After obtaining $\tilde{\vx}_{k+1}$ from \eqref{eq:proximal}, we update the next iterate $\vx_{k+1}$ via a damping step as follows:
\begin{align} \label{eq:damping}
    \vx_{k+1} = \vx_{k} + \alpha_k \Delta \vx_k,
\end{align}
where $\Delta \vx_k \coloneqq \tilde{\vx}_{k+1} - \vx_{k}$.
The full SLFS procedure is summarized in Algorithm~\ref{alg:slfs}. The proximal subproblem \eqref{eq:proximal} is solved by primal-dual hybrid gradient (PDHG)~\cite{chambolle2011first} in Algorithm~\ref{alg:pdhg-fs}. Each PDHG iteration consists of a dual update and a primal update that are matrix-vector products with $\vA(\vx_k)$ and $\vA(\vx_k)^\top$, followed by the closed-form elementwise clip and the $\text{CLIP}$ feasibility map onto $\cF$ (Algorithm~\ref{alg:clip_layer}). Consequently, the algorithm requires no linear solves or Jacobian materialization and is efficient to implement on GPUs.
\begin{figure}[t]
\vspace{-0.2in}
\begin{algorithm}[H]
    \caption{Sequential linearized feasibility seeking}
    \label{alg:slfs}
    \footnotesize
    \begin{algorithmic}[1]
    \Require Prediction $(\vp^{\text{der}}, \vq^{\text{der}})$, load $\vs^{\text{load}}$, topology $\vz$, regularization $\eta > 0$, damping factors $\{\alpha_k\}$, max linearization iterations $K^{\text{lin}}$.
    
    \Ensure Approximately feasible DER setpoints $(\vp^{\text{der}}, \vq^{\text{der}})$.
    
    \State \textbf{Initialize:} $\vx_0 \coloneqq \bigl((\vp^{\text{der}})^\top, (\vq^{\text{der}})^\top\bigr)^\top$
    \State Construct $\vs^{\text{der}}$ from $(\vp^{\text{der}}, \vq^{\text{der}})$; $\vs \gets \vs^{\text{der}} - \vs^{\text{load}}$
    \State Solve $\vv_0 = \vG(\vv_0, \vs; \vz)$ via \eqref{eq:fp_iter}
    
    \For{$k = 0,1,\dots,K^{\text{lin}}-1$}
        \State \textbf{Linearize:} Construct $\vA(\vx_k), \vb(\vx_k)$ from $(\vv_k, \vx_k)$
        \State \textbf{Proximal step:} $\tilde{\vx}_{k+1} \gets$ Alg.~\ref{alg:pdhg-fs} with $(\vA(\vx_k), \vb(\vx_k), \vl, \vu, \vx_k, \eta)$
        \State \textbf{Damping:} $\Delta \vx_k \gets \tilde{\vx}_{k+1} - \vx_k$; $\vx_{k+1} \gets \vx_k + \alpha_k \Delta \vx_k$
        \State Update $\vs^{\text{der}}$ from $\vx_{k+1}$; $\vs \gets \vs^{\text{der}} - \vs^{\text{load}}$
        \State Solve $\vv_{k+1} = \vG(\vv_{k+1}, \vs; \vz)$ via \eqref{eq:fp_iter}
    \EndFor
    \State Extract $(\vp^{\text{der}}, \vq^{\text{der}})$ from $\vx_{K^{\text{lin}}}$
    \State \Return $\vp^{\text{der}}, \vq^{\text{der}}$
    \end{algorithmic}
\end{algorithm}
\vspace{-0.3in}
\end{figure}
\begin{figure}[t]
\vspace{-0.2in}
\begin{algorithm}[H]
    \caption{Primal-dual hybrid gradient method}
    \label{alg:pdhg-fs}
    \footnotesize
    \begin{algorithmic}[1]
    \Require Linearization $(\vA(\vx_k), \vb(\vx_k))$, bounds $(\vl, \vu)$, anchor $\vx_k$, tolerance $\epsilon$, max iterations $T$, regularization $\eta > 0$

    \Ensure Approximately feasible point $\vx \in \cF$

    \State \textbf{Feasibility map}: $\mathrm{CLIP}$ as in Algorithm~\ref{alg:clip_layer}, which maps any point into $\cF$
    \State \textbf{Parameters}: $L \approx \|\vA(\vx_k)\|_2$ (via power iteration)
    \State $\tau = \sigma = 0.95 / (L + 10^{-12})$, $\theta = 1$
    \State $\vx_0 \gets \mathrm{CLIP}(\vx_k)$, $\vy_0 \gets \mathbf{0}$, $\bar{\vx}_0 \gets \vx_0$
    \State $\vl \gets \vl - \vb(\vx_k)$, $\vu \gets \vu - \vb(\vx_k)$

    \For{$t=0,1,\dots,T-1$}
        \State $\tilde{\vy} \gets \vy_t + \sigma \vA(\vx_k) \bar{\vx}_t$
        \State $\vy_{t+1} \gets \tilde{\vy} - \sigma \cdot \clip(\tilde{\vy}/\sigma, \vl, \vu)$
        \State $\vx_{t+1} \gets \mathrm{CLIP}\!\left(\dfrac{\vx_{t} - \tau \vA(\vx_k)^\top \vy_{t+1} + \tau \eta \vx_k}{1 + \tau \eta}\right)$
        \State $\vr \gets \vA(\vx_k) \vx_{t+1} - \clip(\vA(\vx_k) \vx_{t+1}, \vl, \vu)$
        \If{$\|\vr\| < \epsilon$ \textbf{and} $\|\vx_{t+1} - \vx_{t}\|_2 < \epsilon$}
            \State \textbf{break}
        \EndIf
        \State $\bar{\vx}_{t+1} \gets \vx_{t+1} + \theta (\vx_{t+1} - \vx_{t})$
    \EndFor
    \State \Return $\vx_{t+1}$
    \end{algorithmic}
\end{algorithm}
\vspace{-0.3in}
\end{figure}

To analyze the convergence of SLFS, we make the following standard assumption on the linear model error.
\begin{assumption} \label{as:linear_model}
    For any $\hvx, \vx$ in the domain of interest, for some constants $\delta > 0$ and $\nu \in (0,1]$, the linear model satisfies:
    \begin{align*}
        & \infnorm{\vA(\hvx)\vx + \vb(\hvx) - \vg(\vx)} \leq \delta \infnorm{\vx - \hvx}^{1 + \nu},\\
        & \vA(\hvx) \hvx + \vb(\hvx) = \vg(\hvx).
    \end{align*}
\end{assumption}

\begin{restatable}{theorem}{ViolationDescentOne} \label{thm:violation_descent1}
    Suppose Assumption~\ref{as:linear_model} holds, and the damping factor $\alpha_k$ is chosen as
    \begin{align*}
        \alpha_k = \min\left(1, \left(\frac{c \eta}{2 \delta}\right)^{\frac{1}{\nu}} \norm{\Delta \vx_k}^{\frac{1 - \nu}{\nu}}\right)
    \end{align*}
    for some constant $c \in (0,1)$. Then, the sequence $\{\vx_k\}$ generated by \eqref{eq:proximal} and \eqref{eq:damping} satisfies
    \begin{align*}
        V(\vx_{k+1}) \leq V(\vx_k) - (1- c)\alpha_k \frac{\eta}{2} \norm{\Delta \vx_k}^2.
    \end{align*}
\end{restatable}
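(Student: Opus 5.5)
The argument is a standard sufficient-decrease proof for proximal-linear methods. It combines three ingredients: strong convexity of the subproblem \eqref{eq:proximal}, convexity of $V^{\text{lin}}(\cdot;\vx_k)$ along the damped segment, and the linear-model error bound of Assumption~\ref{as:linear_model}.

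\textbf{Step 1: decrease of the linearized model at the proximal point.} Let $\Psi_k(\vx) \coloneqq V^{\text{lin}}(\vx;\vx_k) + \frac{\eta}{2}\norm{\vx-\vx_k}^2$. This function is $\eta$-strongly convex on the convex set $\cF$, and $\tilde\vx_{k+1}$ is its minimizer. Since $\vx_k \in \cF$, the quadratic growth inequality at the minimizer gives
\[
\Psi_k(\vx_k) \ge \Psi_k(\tilde\vx_{k+1}) + \tfrac{\eta}{2}\norm{\vx_k-\tilde\vx_{k+1}}^2 .
\]
The second identity in Assumption~\ref{as:linear_model} implies $V^{\text{lin}}(\vx_k;\vx_k)=V(\vx_k)$. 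Rearranging therefore yields
\[
V^{\text{lin}}(\tilde\vx_{k+1};\vx_k) \le V(\vx_k) - \eta\norm{\Delta\vx_k}^2 .
\]
Even the weaker bound with $\frac{\eta}{2}$ in place of $\eta$, obtained by simply comparing the objective with its value at $\vx_k$, suffices for the claim.

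\textbf{Step 2: transfer to the damped iterate.} The point $\vx_{k+1}=(1-\alpha_k)\vx_k+\alpha_k\tilde\vx_{k+1}$ lies in $\cF$ by convexity. Because $V^{\text{lin}}(\cdot;\vx_k)$ is convex,
\[
V^{\text{lin}}(\vx_{k+1};\vx_k) \le (1-\alpha_k)V(\vx_k) + \alpha_k V^{\text{lin}}(\tilde\vx_{k+1};\vx_k) \le V(\vx_k) - \alpha_k\tfrac{\eta}{2}\norm{\Delta\vx_k}^2 .
\]

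\textbf{Step 3: model error.} The map $\vy \mapsto \infnorm{[\vy-\vu]_++[\vl-\vy]_+}$ is $1$-Lipschitz in $\ell_\infty$. Indeed, for $\vl\le\vu$ each component equals the distance from $y_i$ to $[l_i,u_i]$, so at most one of the two positive parts is nonzero and the component is $1$-Lipschitz in $y_i$. Applying this together with Assumption~\ref{as:linear_model} at $\hvx=\vx_k$, and using $\infnorm{\cdot}\le\norm{\cdot}$,
\[
V(\vx_{k+1}) \le V^{\text{lin}}(\vx_{k+1};\vx_k) + \delta\,\alpha_k^{1+\nu}\norm{\Delta\vx_k}^{1+\nu}.
\]

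\textbf{Step 4: choice of $\alpha_k$.} It remains to show that the error term is at most $c\,\alpha_k\frac{\eta}{2}\norm{\Delta\vx_k}^2$. This holds whenever
\[
\alpha_k^{\nu}\le \frac{c\eta}{2\delta}\norm{\Delta\vx_k}^{1-\nu}.
\]
The prescribed step satisfies this condition: raising $\left(\frac{c\eta}{2\delta}\right)^{1/\nu}\norm{\Delta\vx_k}^{(1-\nu)/\nu}$ to the power $\nu$ gives exactly the right-hand side, and the $\min$ with $1$ only lowers $\alpha_k$. Combining this with Steps 2 and 3 gives the stated bound $V(\vx_{k+1})\le V(\vx_k)-(1-c)\alpha_k\frac{\eta}{2}\norm{\Delta\vx_k}^2$.

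\textbf{Expected obstacle.} The delicate points are the Lipschitz property of the violation map in Step 3 and keeping the norms consistent, namely the $\ell_\infty$ model error against the $\ell_2$ proximal term. I expect Step 3 to be the main thing to check carefully. Two further details need care. First, Assumption~\ref{as:linear_model} must be applicable at $\vx_{k+1}$, which requires $\vx_{k+1}$ to lie in the domain of interest. Second, the case $\Delta\vx_k=0$ is trivial.
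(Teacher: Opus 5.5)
Your proposal is correct and follows essentially the same route as the paper: compare the proximal subproblem's value at $\tilde\vx_{k+1}$ against $\vx_k$, use convexity of $V^{\text{lin}}(\cdot;\vx_k)$ along the damped segment, bound the model error via the $1$-Lipschitz violation map, Assumption~\ref{as:linear_model} and $\infnorm{\cdot}\le\norm{\cdot}$, and absorb that error through the choice of $\alpha_k$. Two small differences: your quadratic-growth bound in Step 1 (with $\eta$ instead of $\frac{\eta}{2}$) is a harmless strengthening the paper does not use, and your remark that $\vx_k\in\cF$ is needed for the comparison is something the paper uses only implicitly; it holds by induction because $\vx_0$ comes from $\text{CLIP}$ and $\cF$ is convex.
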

Thus, as long as the proximal step moves the iterate ($\Delta \vx_k \neq 0$), the violation decreases monotonically.

\section{Experimental Results}
\subsection{Experiment Setup}
We test our method on the IEEE~13-bus, 123-bus, 240-bus, 906-bus, and 8500-node feeders, which contain 3, 13, 22, 12, and 100~PV units and 2, 8, 9, 8, and 45~switches, respectively.

\textbf{Data generation.}
For every feeder, the training set contains 200{,}000 samples (both feasible and infeasible), and each test set contains 2{,}000 samples.
Training data can be generated in a few minutes on CPU, whereas labeling the test sets with IPOPT (for overall evaluation) takes several hours.
To assess generalization and robustness, we also construct extreme scenarios and out-of-distribution test scenarios.
Load demand and DER availability are sampled according to Table~\ref{tab:data_dist}, where load multipliers are relative to the OpenDSS base load, and DER availability is expressed as a fraction of rated capacity.
Switch statuses are drawn independently from a Bernoulli distribution with probability~$0.5$, so both meshed and radial topologies appear in training and testing.
Objective weights $(w_1, w_2)$ are sampled uniformly from $[0, 1]$ and $[0, 10]$, respectively.

\begin{table}[t]
\centering
\caption{Load and DER sampling ranges for training and test datasets.}
\label{tab:data_dist}
\scriptsize
\begin{tabular}{@{}lcc@{}}
\toprule
Dataset / scenario & Load multiplier & DER availability \\
\midrule
Training & $\mathcal{U}[0.2,\,1.5]$ & $\mathcal{U}[0,\,1]$ \\
\midrule
Base (in-distribution) & $\mathcal{U}[0.2,\,1.5]$ & $\mathcal{U}[0,\,1]$ \\
high-load/low-DER & $\mathcal{U}[0.6,\,1]$ & $\mathcal{U}[0.0,\,0.4]$ \\
low-load/high-DER & $\mathcal{U}[0.2,\,0.5]$ & $\mathcal{U}[0.6,\,1]$ \\
Distributional shift & $\mathcal{U}[0.2,\,1.5] \pm \cN(\mu, \sigma)$ & $\mathcal{U}[0,1] \pm \cN(\mu, \sigma)$ \\
\bottomrule
\end{tabular}
\vspace{-0.2in}
\end{table}

\textbf{Metrics.} Let $\cT$ denote the test set.
For each $t \in \cT$, let $\vx^{(t)}$ be the final predicted solution, $f(\vx^{(t)})$ the objective in \eqref{eq:obj}, and $f^{\star,(t)}$ the IPOPT reference objective.
We define the maximum-violation function $V_\text{max} (\vx) = \max \{V(\vx), R_\text{pf}(\vx)\}$, where $R_\text{pf}(\vx)$ represents the maximum residual of power flow equations \eqref{eq:vi_mat} and \eqref{eq:power_balance}.
The evaluation metrics are
\begin{align*}
    \text{Mean vio.} &= \frac{1}{|\cT|} \sum_{t \in \cT} V_\text{max} (\vx^{(t)}),  \\
	\text{Abs. cost gap} &= \frac{1}{|\cT|} \sum_{t \in \cT} \left| f(\vx^{(t)}) - f^{\star,(t)} \right|, \\
    \text{Optimality gap} &= \frac{1}{|\cT|} \sum_{t \in \cT} \frac{f(\vx^{(t)}) - f^{\star,(t)}}{f^{\star,(t)}} \times 100\%.
\end{align*}

\textbf{Methods.} We compare the following methods:
\begin{itemize}[leftmargin=18pt]
    \item \textbf{Direct Penalty}: Predicts DER powers and voltages
    $(\vp^{\text{der}}, \vq^{\text{der}}, \vv)$
    and penalizes equality and inequality violations in the loss, without an embedded power-flow solver.
    \item \textbf{Penalty}: The proposed method \emph{without} SLFS at inference.
    \item \textbf{Penalty+SLFS}: The proposed method.
	\item \textbf{SLFS-FSNet}: The FSNet method of \cite{nguyen2025fsnet}, upgraded with the SMW updates and using  SLFS during both training and inference (See in Appendix~\ref{appendix:slfs_fsnet}).
    \item \textbf{IPOPT}: An interior-point solver with the HSL MA97 linear solver.
    \item \textbf{Linearized OPF}: The OPF problem under a fixed-point linearized power-flow model (Sec.~\ref{sec:linearized model}), solved with Clarabel via CVXPY.
\end{itemize}

All experiments are done in JAX (Python) and Julia, and run on an NVIDIA H200 GPU and an AMD EPYC 9474F CPU.
Additional implementation details are given in Appendix~\ref{appendix:metrics_imp_details}.

\subsection{AC-OPF Performance Across Feeder Sizes}
\subsubsection{Solution quality and speedup}

\begin{figure}
	\centering
    \setlength{\abovecaptionskip}{-2pt}
	\includegraphics[width=0.99\linewidth]{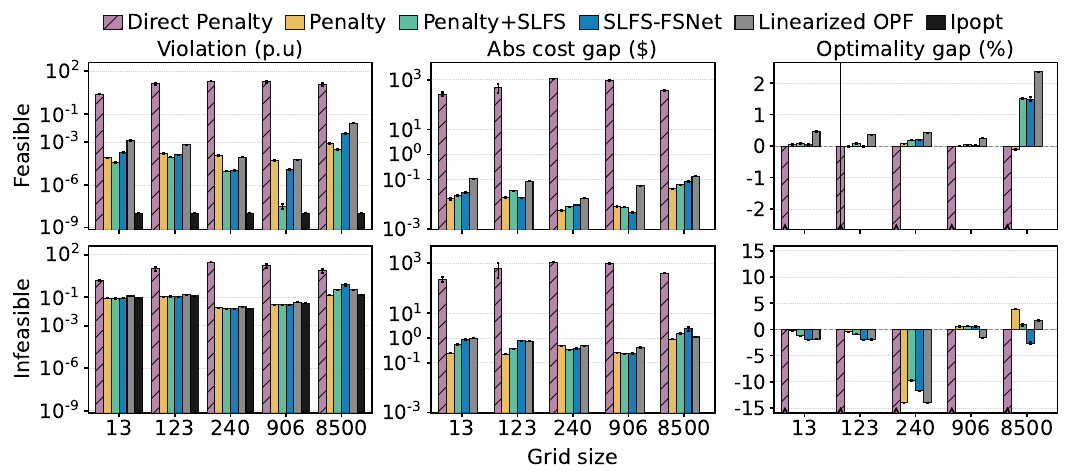}
	\caption{Test results for the base (in-distribution) test dataset.}
	\label{fig:main_results}
    \vspace{-0.1in}
\end{figure}

\begin{figure}
	\centering
    \setlength{\abovecaptionskip}{-1pt}
	\includegraphics[width=0.85\linewidth]{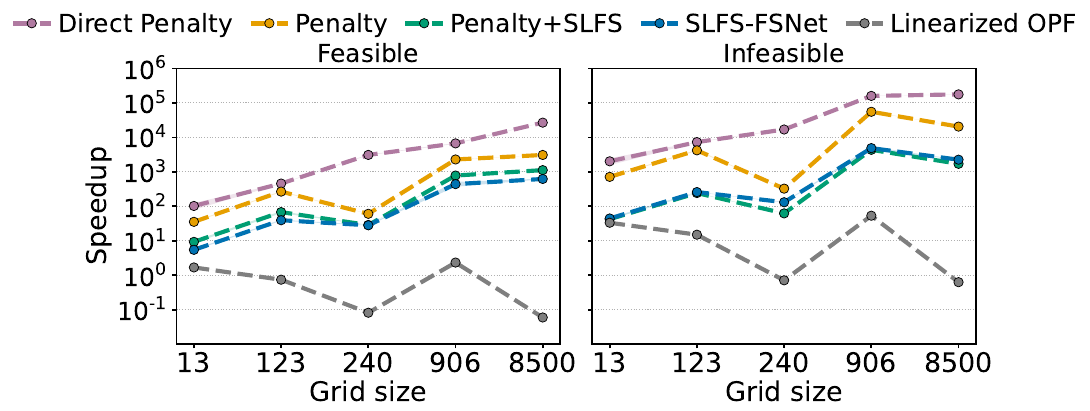}
	\caption{Speedup of the ML-based methods compared to IPOPT.}
	\label{fig:speedup}
    \vspace{-0.2in}
\end{figure}

Figure~\ref{fig:main_results} reports results on the base (in-distribution) test set.
Direct Penalty fails to converge on every feeder, whereas the remaining learning-based methods succeed.
Direct Penalty incurs severe constraint violations and high absolute cost gaps, and its negative relative optimality gaps are an artifact of severe infeasibility rather than improved solutions.  
This contrast indicates that embedding a power-flow solver in the training loop is essential for training success in the multiphase AC-OPF setting considered here.

On feasible instances, Penalty+SLFS attains the lowest constraint violations in all learning-based methods, achieving at most $3.0 \times 10^{-4}$ in the 8500-node feeder and as low as $3.2 \times 10^{-8}$ on the 906-bus feeder.
Its optimality gaps stay below $0.1\%$ on the 13-, 123-, 906-bus feeders, below $0.2\%$ on the 240-bus feeder, and reach $1.5\%$ only on the 8500-node feeder, where the absolute cost error is just \$0.06 but the objective values themselves are small, inflating the relative gap for all methods.
Relative to Penalty alone, enabling SLFS at inference reduces constraint violations by one to three orders of magnitude, at only a modest increase in absolute cost error.
SLFS-FSNet yields slightly higher violations than Penalty+SLFS, with comparable absolute cost errors and optimality gaps.
Noticeably, Penalty+SLFS outperforms Linearized OPF on all three solution-quality metrics on every feeder: its mean violations are at least $7\times$ and up to three orders of magnitude lower, and its absolute cost errors and optimality gaps are $1.6$--$7.4\times$ smaller.

On infeasible instances, Penalty+SLFS and SLFS-FSNet achieve constraint violations comparable to IPOPT.
On the smaller feeders (13-bus, 123-bus, and 240-bus), they can even report negative optimality gaps, meaning lower costs than the IPOPT reference.
When the AC-OPF problem is infeasible, however, the learning-based methods and IPOPT generally return different points that attain a similar violation level, so optimality gaps should be interpreted with care.

Figure~\ref{fig:speedup} summarizes solve speedups relative to IPOPT.
Direct Penalty is the fastest method but does not produce usable solutions.
Penalty is next, followed by Penalty+SLFS and SLFS-FSNet.
On feasible test sets, Penalty+SLFS is faster than SLFS-FSNet on every feeder---by $1.7$--$1.8\times$ except on the 240-bus feeder, where the two nearly coincide---because SLFS can be skipped when the network prediction is already feasible, whereas SLFS-FSNet always runs the repair procedure. 
On infeasible test sets, the two methods have similar speedups because SLFS is invoked on every sample.
Overall, Penalty+SLFS delivers one to three orders of magnitude speedup over IPOPT, reaching $780\times$ and $1121\times$ on the two largest feeders, while Linearized OPF is the slowest baseline and offers only limited acceleration.

In summary, Penalty+SLFS improves feasibility over Penalty, is typically faster than SLFS-FSNet, and delivers up to three orders of magnitude speedup over IPOPT.
It also outperforms Linearized OPF on constraint violation, optimality gap, and runtime, supporting learning-based surrogates as a practical alternative to linearized OPF for this problem class.

\subsubsection{Robustness to extreme regimes and distributional shift}

\begin{figure}[t]
	\centering
    \setlength{\abovecaptionskip}{-1pt}
	\includegraphics[width=0.99\linewidth]{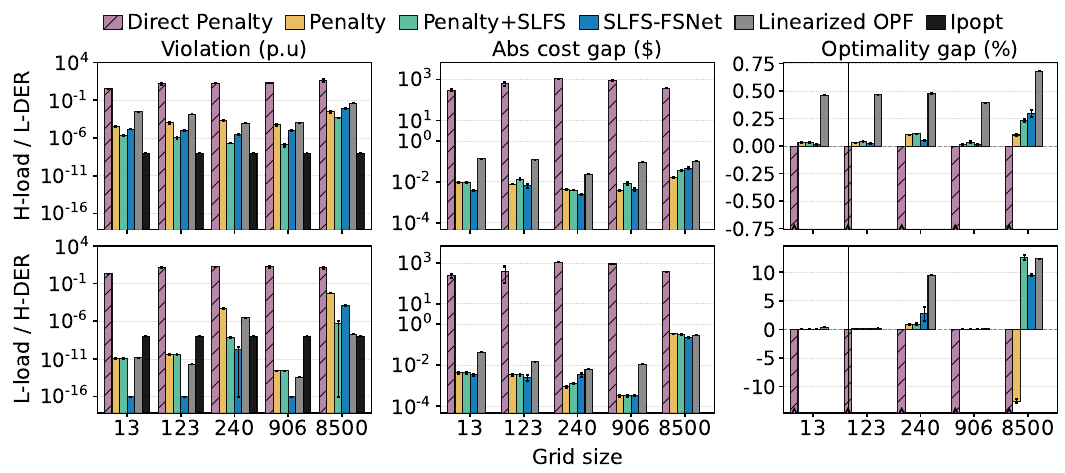}
	\caption{Test results on extreme scenarios.}
	\label{fig:extreme_scenarios}
    \vspace{-0.15in}
\end{figure}

\begin{figure}[t]
	\centering
    \setlength{\abovecaptionskip}{0pt}
	\includegraphics[width=0.75\linewidth]{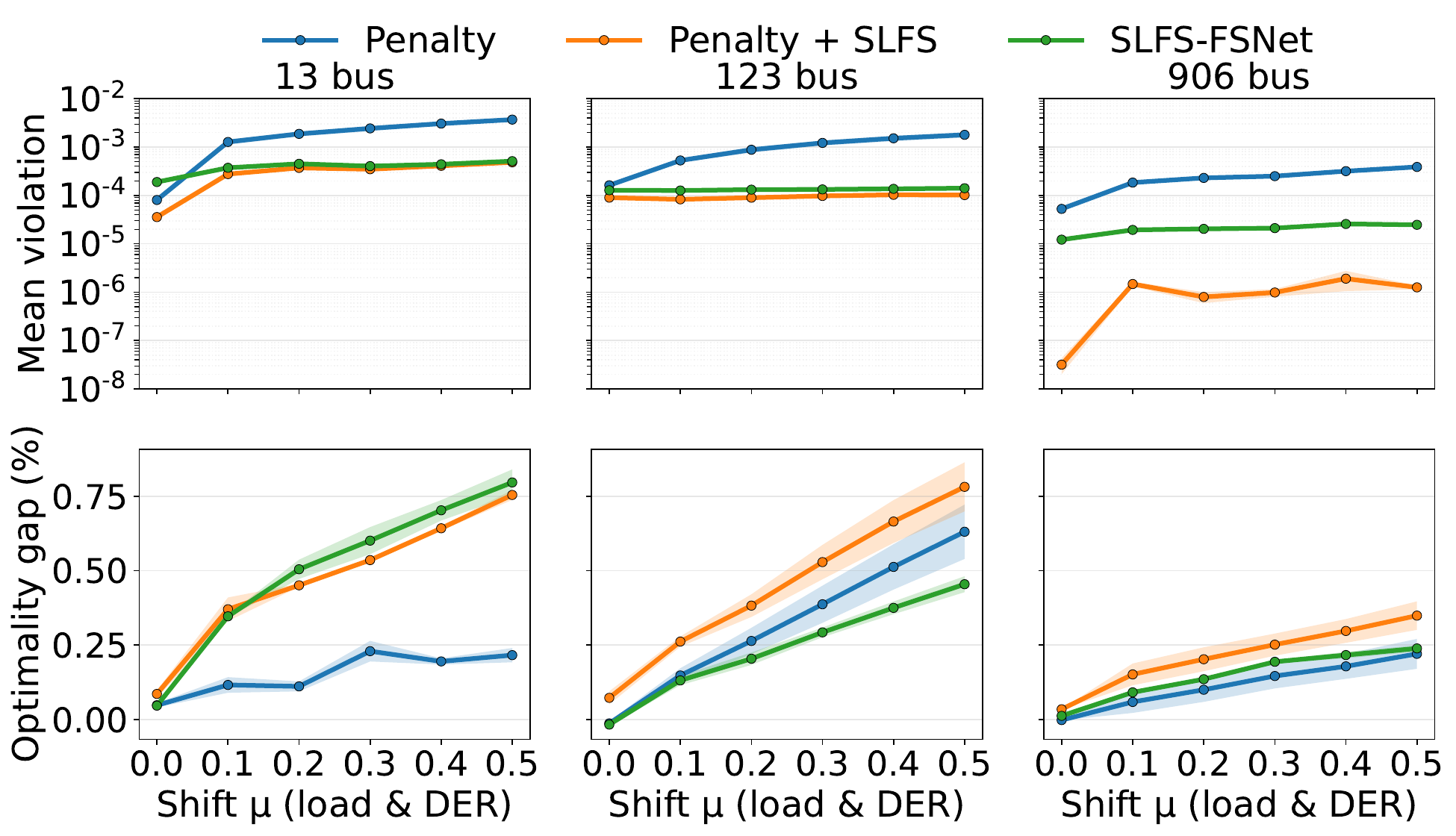}
	\caption{Robustness evaluation of the violation and optimality gap.}
	\label{fig:distributional_shift}
    \vspace{-0.15in}
\end{figure}

Figure~\ref{fig:extreme_scenarios} reports results on two extreme operating regimes: high-load/low-DER and low-load/high-DER.
In the high-load/low-DER regime, Penalty+SLFS again attains the lowest constraint violations, while its optimality gaps and absolute cost errors remain comparable to those of Penalty and SLFS-FSNet.
The largest optimality gap is $0.25\%$ on the 8500-node feeder and is substantially smaller on the other feeders.
Linearized OPF degrades relative to the base scenario and exhibits markedly higher absolute cost errors and optimality gaps than Penalty+SLFS.
The low-load/high-DER regime is easier for the learning-based methods: violations are low---even below IPOPT's on small and medium feeders---and optimality gaps remain tiny except on the 8500-node feeder.

Figure~\ref{fig:distributional_shift} evaluates robustness under distributional shift.
We set the Gaussian noise mean and standard deviation equal ($\mu = \sigma$) and vary $\mu$ from $0$ to $0.5$, generating 2{,}000 test samples per feeder at each noise level.
Constraint violations of Penalty increase steadily with noise, whereas Penalty+SLFS and SLFS-FSNet remain largely flat---a direct benefit of the SLFS repair step.
Optimality gaps rise only mildly for all three methods and stay below $1\%$ at all perturbation levels.
These results show that Penalty+SLFS continues to produce high-quality solutions under substantial distributional shifts in load demand and PV availability.
This robustness stems in part from its self-supervised training procedure, which makes training data inexpensive to generate and allows the model to be trained on large datasets spanning diverse operating regimes.

\subsection{Ablations and Computational Analysis}
\subsubsection{Differentiating the power flow solver}

\begin{table}[t]
    \caption{Differentiation of the power flow solver.}
\label{tab:differentiation_of_power_flow_solver}
    \centering
    \setlength{\tabcolsep}{2pt}
    \fontsize{6}{6.5}\selectfont
    \begin{tabular}{l l c c c c c}
        \toprule
        \multirow{2}{*}{Grid} & \multirow{2}{*}{Method} & Mean & Max & Peak GPU & Time & \multirow{2}{*}{Speedup} \\
        & & error & error & (GB) & /batch (ms) & \\
        \midrule
        \multirow{6}{*}{13 bus} & Full & 0 & 0 & 0.06 & 0.24 & 1.00 \\
         & 1-Step & 0.01 & 0.13 & 0.06 & 0.15 & 1.66 \\
         & 2-Step & 2.27e-03 & 0.02 & 0.06 & 0.17 & 1.43 \\
         & 3-Step & 2.72e-04 & 3.71e-03 & 0.06 & 0.18 & 1.33 \\
         & 4-Step & 5.59e-05 & 7.57e-04 & 0.06 & 0.19 & 1.30 \\
         & Implicit & 4.98e-11 & 8.25e-10 & 0.07 & 2.11 & 0.11 \\
        \midrule
        \multirow{6}{*}{123 bus} & Full & 0 & 0 & 0.10 & 1.11 & 1.00 \\
         & 1-Step & 8.02e-03 & 0.10 & 0.10 & 0.61 & 1.82 \\
         & 2-Step & 1.16e-03 & 0.01 & 0.10 & 0.77 & 1.43 \\
         & 3-Step & 9.23e-05 & 1.43e-03 & 0.10 & 0.83 & 1.33 \\
         & 4-Step & 1.37e-05 & 1.82e-04 & 0.10 & 0.79 & 1.40 \\
         & Implicit & 5.78e-13 & 2.33e-10 & 0.43 & 13.41 & 0.08 \\
        \midrule
        \multirow{6}{*}{906 bus} & Full & 0 & 0 & 3.98 & 298.72 & 1.00 \\
         & 1-Step & 0.01 & 0.13 & 3.98 & 67.39 & 4.43 \\
         & 2-Step & 1.18e-03 & 0.01 & 4.80 & 88.94 & 3.36 \\
         & 3-Step & 7.48e-05 & 8.16e-04 & 4.80 & 108.97 & 2.74 \\
         & 4-Step & 7.99e-06 & 8.07e-05 & 4.80 & 130.16 & 2.29 \\
         & Implicit & 3.78e-14 & 7.35e-13 & 41.85 & 1388.29 & 0.22 \\
        \bottomrule
    \end{tabular}
    \vspace{-0.2in}
\end{table}

Table~\ref{tab:differentiation_of_power_flow_solver} compares the $M$-step Jacobian approximation ($M=1,2,3,4$) with implicit differentiation for the power-flow solver.
Implicit differentiation is the most accurate option, but it is also the slowest---even slower than full unrolled differentiation---and has the largest memory footprint.
For the $M$-step approximation, both accuracy and runtime increase with $M$.
The \mbox{2-,} \mbox{3-,} and 4-step variants use similar memory, likely due to JAX kernel fusion.
These approximations also yield substantial speedups over full unrolled differentiation. For example, on the 906-bus feeder, the 3-step approximation is $2.74\times$ faster while remaining accurate enough for training.
We therefore adopt the 3-step approximation in all experiments as a practical balance between accuracy and cost.
The end-to-end results in Figures~\ref{fig:main_results} and \ref{fig:extreme_scenarios} confirm that this choice is sufficient to train high-quality models.
More broadly, the results suggest that approximate Jacobian information is adequate for training, and that unrolled differentiation through the power-flow solver can be made efficient because the fixed-point iteration converges fast and each step is a simple matrix--vector product.

\subsubsection{SMW update for topology change}\label{sec:smw_results}

\begin{table}[t]
    \caption{Accuracy and speedup of the SMW update for matrix inversion.}
    \label{tab:speedup_of_smw_update}
    \centering
    \setlength{\tabcolsep}{2pt}
    \fontsize{6}{6.5}\selectfont
    \begin{tabular}{l c r r r r r}
        \toprule
        Grid & IR Steps & Inverse Diff. & Max Residual & SMW (ms) & Direct (ms) & Speedup \\
        \midrule
        \multirow{3}{*}{13 bus} & 0 & 3.71e-05 & 1.20e-04 & 1.12 & \multirow{3}{*}{1.24} & 1.11 \\
         & 1 & 4.77e-09 & 1.75e-08 & 1.51 &  & 0.82 \\
         & 2 & 1.23e-10 & 2.17e-10 & 1.89 &  & 0.66 \\
        \midrule
        \multirow{3}{*}{123 bus} & 0 & 4.90e-04 & 3.59e-04 & 1.94 & \multirow{3}{*}{7.19} & 3.71 \\
         & 1 & 1.56e-05 & 1.74e-07 & 2.63 &  & 2.73 \\
         & 2 & 1.53e-05 & 5.33e-10 & 3.29 &  & 2.19 \\
        \midrule
        \multirow{3}{*}{240 bus} & 0 & 1.89e-03 & 2.05e-03 & 2.41 & \multirow{3}{*}{29.08} & 12.09 \\
         & 1 & 1.33e-05 & 1.18e-06 & 5.71 &  & 5.10 \\
         & 2 & 9.22e-06 & 2.20e-09 & 8.69 &  & 3.35 \\
        \midrule
        \multirow{3}{*}{906 bus} & 0 & 2.09e-11 & 1.50e-10 & 7.30 & \multirow{3}{*}{140.97} & 19.31 \\
         & 1 & 2.64e-11 & 1.97e-11 & 57.26 &  & 2.46 \\
         & 2 & 2.71e-11 & 1.78e-11 & 110.25 &  & 1.28 \\
        \midrule
        \multirow{3}{*}{8500 node} & 0 & 8.86e-08 & 3.47e-09 & 58.90 & \multirow{3}{*}{1666.63} & 28.30 \\
         & 1 & 1.67e-09 & 7.41e-11 & 1541.56 &  & 1.08 \\
         & 2 & 1.66e-09 & 6.86e-11 & 3023.57 &  & 0.55 \\
        \bottomrule
    \end{tabular}
    \vspace{-0.2in}
\end{table}

Table~\ref{tab:speedup_of_smw_update} compares the SMW update of the inverse admittance matrix with direct inversion at batch size~8.
Without iterative refinement (IR), SMW speedups grow steadily with feeder size---from $1.1\times$ on the 13-bus system to $12\times$, $19\times$, and $28\times$ on the 240-bus, 906-bus, and 8500-node feeders---while direct inversion slows from $1.2$\,ms to $1.7$\,s.
IR is useful mainly on the smaller feeders: without it, residuals on the 13-, 123-, and 240-bus systems are on the order of $10^{-4}$--$10^{-3}$, and a single IR step cuts them by several orders of magnitude (e.g., to ${\sim} 10^{-6}$ on 240-bus) at only a moderate runtime cost, with a second step giving diminishing returns.
On the 906-bus and 8500-node feeders, by contrast, the residual is already $\lesssim 10^{-9}$ with no IR, so refinement is unnecessary for these feeders.
In practice, we therefore use SMW alone on large feeders and add 1-2 SMW refinement iterations only on smaller systems.

\subsubsection{SLFS vs.\ least-squares solver}

\begin{table}[t]
    \caption{Comparison of SLFS and the least-squares solver for feasibility seeking.}
    \label{tab:comparison_of_slfs_and_least_squares_solver}
    \centering
    \setlength{\tabcolsep}{2pt}
    \fontsize{6}{6.5}\selectfont
    \begin{tabular}{lccccccc}
        \toprule
        Grid & SLFS (ms) & LM (ms) & Speedup & SLFS eq & SLFS ineq & LM eq & LM ineq \\
        \midrule
        13 bus & 0.66 & 59.48 & 89.49 & 1.49e-05 & 1.18e-05 & 3.23e-04 & 1.45e-04 \\
        123 bus & 1.60 & 161.40 & 100.85 & 6.79e-07 & 2.30e-05 & 7.83e-05 & 7.55e-05 \\
        240 bus & 10.63 & 12900.49 & 1.21e+03 & 5.17e-06 & 1.59e-06 & 0.01 & 4.52e-11 \\
        \bottomrule
    \end{tabular}
    \vspace{-0.2in}
\end{table}

Table~\ref{tab:comparison_of_slfs_and_least_squares_solver} compares SLFS with the least-squares feasibility seeker from FSNet~\cite{nguyen2025fsnet}, solved by Levenberg--Marquardt (LM) method.
SLFS is $89\times$--$1210\times$ faster across the three feeders.
It also yields lower equality residuals in every case and lower inequality residuals on the 13- and 123-bus feeders; on the 240-bus feeder, LM attains a smaller inequality residual but leaves a much larger equality residual ($ {\sim} 10^{-2}$ vs.\ $10^{-6}$).
Thus SLFS is both more accurate overall and far more scalable as feeder size grows.

\subsubsection{Training time and memory}

\begin{table}[t]
	\centering
	\caption{Training time and memory usage of the Penalty+SLFS method.}
	\label{tab:training_time_and_memory_usage}
	\fontsize{5.5}{6.5}\selectfont
	\setlength{\tabcolsep}{1pt}
	\begin{tabular}{lccccccccc}
		\toprule
		& \multicolumn{3}{c}{Direct Penalty} & \multicolumn{3}{c}{Penalty+SLFS} & \multicolumn{3}{c}{SLFS-FSNet} \\
		\cmidrule(lr){2-4} \cmidrule(lr){5-7} \cmidrule(lr){8-10}
		Grid & Epochs & Time (h) & Peak Mem (GB) & Epochs & Time (h) & Peak Mem (GB) & Epochs & Time (h) & Peak Mem (GB) \\
		\midrule
		13 bus & 20 & 0.15\std{0.05} & 0.52\std{0.00} & 10 & 0.09\std{0.01} & 0.37\std{0.00} & 10 & 0.09\std{0.01} & 0.68\std{0.00} \\
		123 bus & 20 & 0.47\std{0.24} & 2.30\std{0.00} & 5 & 0.05\std{0.01} & 2.30\std{0.00} & 5 & 0.23\std{0.04} & 2.30\std{0.00} \\
		240 bus & 20 & 0.22\std{0.00} & 8.19\std{0.00} & 5 & 0.31\std{0.00} & 8.19\std{0.00} & 5 & 0.83\std{0.00} & 8.19\std{0.00} \\
		906 bus & 20 & 0.78\std{0.00} & 17.08\std{0.00} & 5 & 0.25\std{0.00} & 17.08\std{0.00} & 5 & 0.62\std{0.03} & 17.08\std{0.00} \\
		8500 node & 10 & 3.72\std{0.05} & 68.60\std{0.00} & 5 & 2.29\std{0.02} & 68.60\std{0.00} & 5 & 6.75\std{0.02} & 68.60\std{0.00} \\
		\bottomrule
	\end{tabular}
    \vspace{-0.1in}
\end{table}

Table~\ref{tab:training_time_and_memory_usage} compares training time and peak GPU memory across learning-based methods.
Direct Penalty generally requires more wall-clock time than Penalty+SLFS because it uses more epochs.
At a matched epoch budget, Penalty+SLFS has similar training time to SLFS-FSNet on the 13-bus feeder, but is $2.46$--$4.6\times$ faster on the larger feeders.
From the 123-bus feeder onward, peak GPU memory is nearly identical across methods, as it is dominated by the admittance matrix and its inverse.

\subsubsection{Training regimes}

\begin{figure}[t]
	\centering
    \setlength{\abovecaptionskip}{0pt}
	\includegraphics[width=0.78\linewidth]{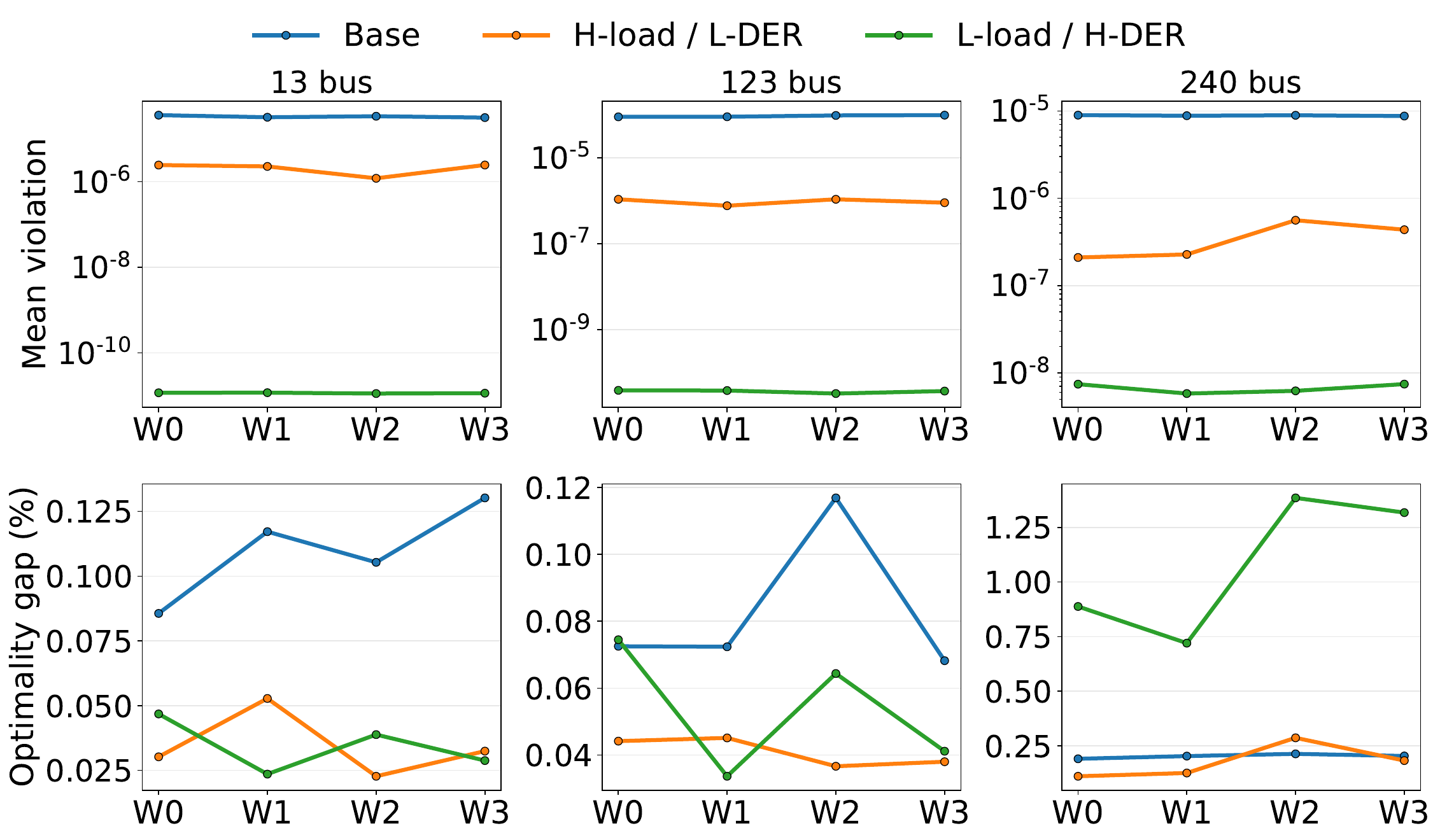}
	\caption{The Penalty+SLFS method with different training datasets.}
	\label{fig:regime_ablation}
    \vspace{-0.15in}
\end{figure}

To study how the training distribution affects Penalty+SLFS, we vary the mix of load/DER regimes while keeping the learning algorithm and hyperparameters fixed.
Sampling weights for regimes W0--W3 are  in Table~\ref{tab:regime-defs-weights}.

\begin{table}[t]
	\centering
	\caption{Load/DER training regimes and sampling weights W0--W3.}
	\label{tab:regime-defs-weights}
    \fontsize{6}{6.5}\selectfont
	\begin{tabular}{@{}lcc|@{\hspace{0.4em}}cccc@{}}
	  \toprule
	  Regime & Load scale & DER avail. & W0 & W1 & W2 & W3 \\
	  \midrule
	  Base                 & [0.2, 1.5] & [0.0, 1.0] & 1.00 & 0.60 & 0.60 & 0.60 \\
	  High load, low DER  & [0.7, 1.0] & [0.0, 0.3] & 0.00 & 0.05 & 0.35 & 0.20 \\
	  Low load, high DER  & [0.2, 0.5] & [0.6, 1.0] & 0.00 & 0.35 & 0.05 & 0.20 \\
	  \bottomrule
	\end{tabular}
    \vspace{-0.1in}
\end{table}

Figure~\ref{fig:regime_ablation} reports mean violation and optimality gap for Penalty+SLFS under the four training mixes W0--W3, evaluated on the base, high-load/low-DER, and low-load/high-DER test regimes.
Across all three feeders, mean violations remain small for every mix: typically $10^{-4}$--$10^{-5}$ on the base test set, lower under high-load/low-DER, and lowest under low-load/high-DER (often by several orders of magnitude).
The curves are nearly flat in W0--W3, so the training mix has little effect on feasibility once SLFS is applied at inference.
Optimality gaps, by contrast, depend more on both the test regime and the feeder.
On the 13-bus and 123-bus systems, gaps stay below roughly $0.15\%$ for all regimes, with the base test set usually the hardest and the extreme regimes somewhat easier; the high-load/low-DER gap is especially stable across mixes.
On the 240-bus feeder, the low-load/high-DER regime is the outlier: its gap rises to about $1.4\%$ at W2, whereas the base and high-load/low-DER gaps remain in the $0.1$--$0.3\%$ range.
Overall, there is no consistent ranking among W0--W3, and overweighting a single extreme regime (as in W1 or W2) does not reliably improve that regime's optimality.
This suggests that the broad base distribution already covers much of the operating space, while SLFS keeps constraint satisfaction robust regardless of how the training data is mixed.

\begin{figure}[t]
	\centering
    \setlength{\abovecaptionskip}{-1pt}
	\includegraphics[width=0.75\linewidth]{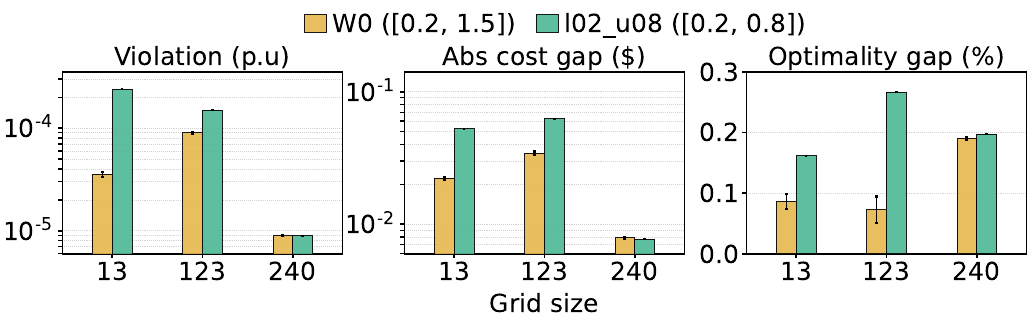}
	\caption{Performance of the Penalty+SLFS method with training datasets with load range $[0.2, 0.8]$ and tested on the load range $[0.2, 1.5]$.}
	\label{fig:l02_u08_ablation_feasible}
    \vspace{-0.2in}
\end{figure}

We further train on loads restricted to $[0.2, 0.8]$ and evaluate on the wider range $[0.2, 1.5]$ (Figure~\ref{fig:l02_u08_ablation_feasible}).
Constraint violations, cost gaps, and optimality gaps rise only slightly relative to the base setting. 
Violations remain on the order of $10^{-3}$ and optimality gaps stay below $0.3\%$, which is still acceptable for practical use.
As training-data generation is cheap in our self-supervised approach, we recommend training over the broadest operating range to maximize solution quality and robustness.

Additional experimental results on the penalty weight $\rho$, the inclusion of switch statuses in the neural-network input, MLP versus graph neural network backbones, and inference-time scaling with batch size are reported in Appendix~\ref{appendix:exp_results}.

\section{Conclusion}
This paper introduced Penalty+SLFS, a scalable self-supervised framework for multiphase distribution AC-OPF under switch-induced topology reconfiguration. The framework trains a neural network directly by minimizing the AC-OPF objective and constraint penalties, using a clipping layer and a differentiable fixed-point power-flow solver, without requiring labeled optimal solutions. 
To enable efficient training, an $M$-step Jacobian approximation--with approximation error that decays geometrically in $M$--reduces the time and memory costs of differentiation relative to implicit differentiation. Sherman--Morrison--Woodbury updates efficiently account for switch-induced changes in the inverse admittance matrix. At inference time, sequential linearized feasibility seeking (SLFS) repairs constraint violations whenever they exceed a prescribed tolerance; SLFS provably reduces violations monotonically and relies primarily on GPU-friendly matrix--vector operations.
Numerical results on IEEE feeders from 13 to 8{,}500 nodes show that Penalty+SLFS produces high-quality solutions with near-zero constraint violations, optimality gaps below $0.2\%$ on small-medium feeders and $1.5\%$ on the 8500-node feeder, and speedups of $780\times$ and $1121\times$ over IPOPT on the two largest feeders.
Leveraging large and diverse training sets, the framework remains robust under extreme load/DER regimes and distributional shifts, with optimality gaps staying below $1\%$ at the largest perturbation level.
Together, this shows the potential of surrogate learning methods to learn fast, feasible, and robust approximations to multiphase distribution AC-OPF.

\bibliographystyle{IEEEtran}
\bibliography{references}

\clearpage
\appendix 
\renewcommand\thefigure{\thesection.\arabic{figure}}    
\setcounter{figure}{0}    
\renewcommand\thetable{\thesection.\arabic{table}}    
\setcounter{table}{0}

\subsection{Additional experimental results} \label{appendix:exp_results}
Figure~\ref{fig:rho_ablation_feasible} studies the effect of the penalty weight $\rho$, used in \eqref{eq:training_loss}.
Constraint violations decrease as $\rho$ increases, as expected when violations are penalized more heavily.
Absolute cost and optimality gaps are smallest at $\rho = 10^{-5}$ and are typically largest at the most aggressive setting $\rho = 10^{6}$.

Figure~\ref{fig:topo_nn_input_ablation} ablates whether switch statuses are provided as neural-network inputs.
Models that receive switch statuses achieve substantially smaller absolute cost and optimality gaps, underscoring the value of topology information in the input.

Figure~\ref{fig:mlp_vs_gcn} compares the test results of the MLP and GCN backbones in the Penalty + SLFS method, which are trained with the same training pipeline and budget.
While both backbones achieve similar feasibility, the MLP backbone achieves consistently lower absolute cost gaps and optimality gaps than the GCN backbone in all cases. 
That the difference appears only in optimality and not in feasibility is consistent with the design of the framework, in which SLFS enforces the constraints regardless of the backbone.

In addition, Figure~\ref{fig:batch_size_inference_time} shows how full-pipeline inference wall time scales with batch size. On the 13-, 123-, and 240-bus systems, batch time grows sublinearly ($t \propto B^{\alpha}$ with $\alpha = 0.38$, $0.51$, $0.55$):
increasing $B$ from 1 to 64 raises wall time by only $4.5$--$9.7\times$, yielding $6.6$--$14\times$ higher per-sample throughput and up to $1.6 \times 10^{4}$ solutions/s on the 13-bus feeder. On the 906-bus and 8500-node systems the trend becomes linear to slightly superlinear ($\alpha = 0.97$ and $1.27$), so throughput peaks at $B = 8$ and $B = 1$ respectively and then degrades, and large batches hit GPU memory limits (8500-node gets an out-of-memory error at $B = 64$).

\subsection{Hyperparameters and implementation details}\label{appendix:metrics_imp_details}

Training hyperparameters are summarized in Table~\ref{tab:hyperparameters}.
The $M$-step Jacobian truncation uses $M=3$.
SLFS uses $K^{\mathrm{lin}}=2$ linearization steps and proximal weight $\eta=0$.
For Penalty+SLFS, the penalty weight $\rho$ is set to $10^{5}$. 

Power flow is solved with the fixed-point iteration of Section~\ref{sec:power_flow_solver} on the 13-bus, 123-bus, 906-bus, and 8500-node feeders. On the 240-bus feeder we use Newton--Raphson, because the fixed-point iteration does not converge.

\subsection{SLFS-FSNet}\label{appendix:slfs_fsnet}

The architecture of SLFS-FSNet is adapted from FSNet~\cite{nguyen2025fsnet}, which uses a feasibility-seeking algorithm to correct the neural-network prediction into a feasible solution.
The SLFS module is embedded in both training and inference to enforce constraint satisfaction: it takes raw DER setpoints $\vp^{\text{der,raw}}, \vq^{\text{der,raw}}$ as input and outputs feasible setpoints $\vp^{\text{der}}, \vq^{\text{der}}$.

The original FSNet~\cite{nguyen2025fsnet} assumes a fixed topology and solves the feasibility-seeking problem with the Levenberg--Marquardt method.
We instead embed our SMW update and SLFS module in FSNet to handle topology changes and accelerate feasibility seeking (Figure~\ref{fig:fsnet_framework}).


\begin{table}[t]
    \centering
    \caption{Hyperparameters}
    \scriptsize
    \setlength{\tabcolsep}{3pt}
    \begin{tabular}{ll|ll}
    \toprule
    \textbf{Parameter} & \textbf{Value} & \textbf{Parameter} & \textbf{Value} \\
    \midrule
    Optimizer & Adam & LR scheduler & Cosine \\
    Learning rate & $5\times10^{-5}$ & Batch size & 4-32 \\
    Hidden layers & 4 & Nonlinearity & Leaky ReLU \\
    Jacobian steps ($M$) & 3 & Lin.\ steps ($K^{\mathrm{lin}}$) & 2 \\
    Proximal weight $\eta$ & $0$ & Random seed & $1,2,3$ \\
    Penalty $\rho$ (Penalty) & $10^{5}$ & Distance $\gamma$ (FSNet) & $1$ \\
    \bottomrule
    \end{tabular}
    \label{tab:hyperparameters}
\end{table}

\begin{figure}[t]
	\centering
	\includegraphics[width=0.9\linewidth]{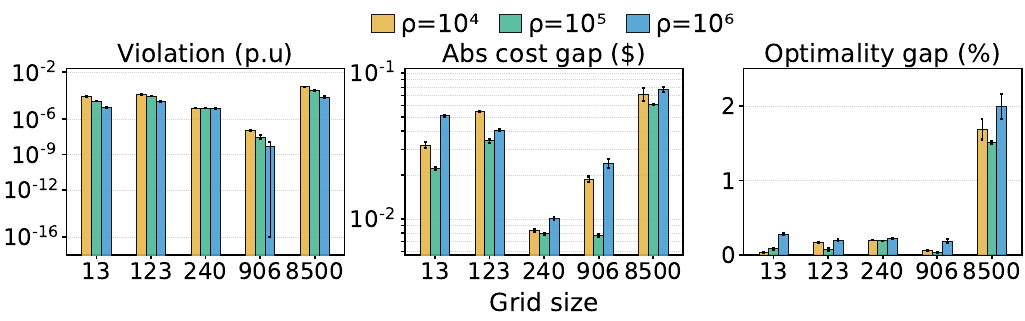}
	\caption{Performance of the Penalty + SLFS method with different penalty weights $\rho$.}
	\label{fig:rho_ablation_feasible}
\end{figure}

\begin{figure}[t]
	\centering
	\includegraphics[width=0.9\linewidth]{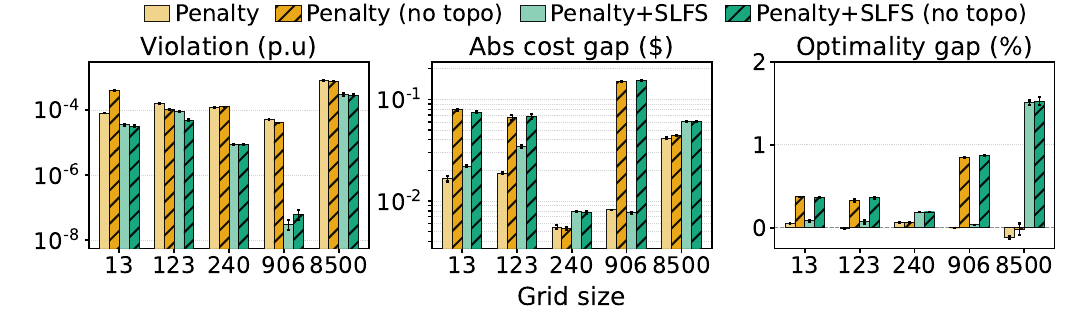}
	\caption{Performance of the models with and without the switch statuses in the NN input.}
	\label{fig:topo_nn_input_ablation}
\end{figure}

\begin{figure}[t]
	\centering
	\setlength{\subfigcapskip}{-4pt}
	\setlength{\abovecaptionskip}{0pt}
	\subfigure[Base scenario]{%
		\includegraphics[width=0.8\linewidth]{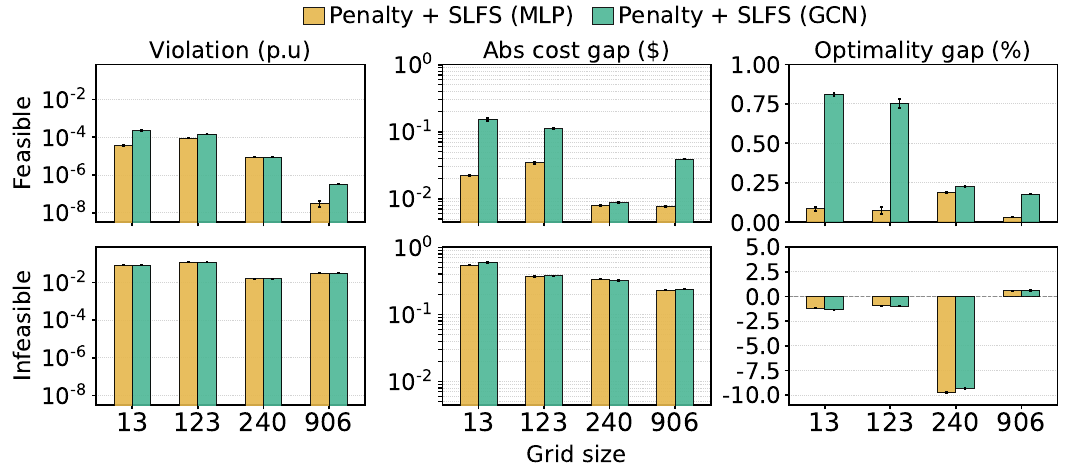}
		\label{fig:mlp_vs_gcn_base}
	}\\[-0.4em]
	\subfigure[Extreme scenarios]{%
		\includegraphics[width=0.8\linewidth]{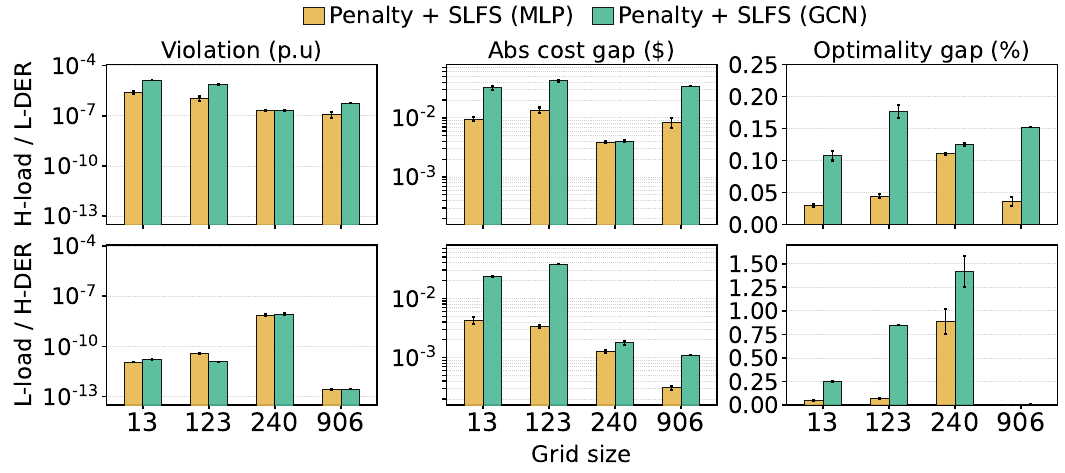}
		\label{fig:mlp_vs_gcn_regime}
	}
	\caption{MLP vs.\ GCN test results.}
	\label{fig:mlp_vs_gcn}
\end{figure}

\subsection{Linearized network constraints}\label{appendix:lin_constraints}

This appendix continues the linearized constraint model in Sec.~\ref{sec:slfs}, building on the affine voltage model \eqref{eq:lin_pf_simple} and the voltage-magnitude linearization \eqref{eq:vmag_matrix}--\eqref{eq:Av_vmag}.

\textbf{Substation power.}
The substation limit \eqref{eq:s0_lim} involves $|\vs_0|$, where \eqref{eq:vs0_mat} defines the per-phase slack-bus injection
$\vs_0 = \diag(\vv_0)\bigl(\overline{\vY}_{00}\overline{\vv}_0 + \overline{\vY}_{0L}(\vz)\overline{\vv}\bigr)$.
Because the slack voltage $\vv_0$ is fixed, $\vs_0$ depends on $\vx$ only through the load-bus voltages $\vv$.
At the reference point,
$\hvs_0 \coloneqq \diag(\vv_0)\bigl(\overline{\vY}_{00}\overline{\vv}_0 + \overline{\vY}_{0L}(\vz)\overline{\hvv}\bigr)$,
and substituting \eqref{eq:lin_pf_simple} gives
\begin{align*}
	\vs_0 &\approx \hvs_0 + \diag(\vv_0)\,\overline{\vY_{0L}(\vz)\,\vB}\,(\vx-\hvx).
\end{align*}
Applying the same magnitude linearization as in \eqref{eq:vmag_matrix}--\eqref{eq:Av_vmag} yields
\begin{align}
	|\vs_0| &\approx |\hvs_0| + \vA_s(\hvx) (\vx - \hvx), \nonumber \\
	\vA_s(\hvx) &\coloneqq \diag(|\hvs_0|)^{-1} \Re\left\{\diag(\overline{\hvs_0})\,\diag(\vv_0)\,\overline{\vY_{0L}(\vz)\,\vB} \right\},
	\label{eq:As_sub}
\end{align}
where $|\vs_0| \coloneqq (|s_{0,\phi}|)_{\phi \in \Phi_0}$.

\textbf{Angle-difference.}
For each static branch $(m,n)\in\cE$, let $v_m$ and $v_n$ denote the terminal phase voltages.
The angle-difference limit \eqref{eq:angle_lim} can be rewritten with
$\xi_{mn} \coloneqq \overline{v_m}\,v_n$ and $\kappa_{mn} \coloneqq \tan(\alpha_{mn}^{\mathrm{max}})$ as the one-sided pair
\begin{align}
	&\Im\{\xi_{mn}\} - \kappa_{mn}\,\Re\{\xi_{mn}\} \leq 0, \nonumber \\
	&-\Im\{\xi_{mn}\} - \kappa_{mn}\,\Re\{\xi_{mn}\} \leq 0.
	\label{eq:angle_pair}
\end{align}
To linearize $\xi_{mn}$, extend \eqref{eq:lin_pf_simple} to all phase nodes by stacking
$\vv^{\mathrm{full}} = (\vv^\top,\vv_0^\top)^\top$ and writing
$v_n \approx \hat{v}_n + \vB^{\mathrm{full}}_{n,:}(\vx-\hvx)$,
where $\hat{v}_n$ is the $n$-th entry of $\vv^{\mathrm{full}}$ at $(\hvx,\hvv)$ and $\vB^{\mathrm{full}}$ equals $\vB$ on rows $n\in\cP$ and is zero on the fixed slack rows.
For branch $(m,n)$, linearizing $\xi_{mn}=\overline{v_m}\,v_n$ around $(\hvx,\hvv)$ gives
\begin{align*}
	\hat{\xi}_{mn} &\coloneqq \overline{\hat{v}_m}\,\hat{v}_n, \\
	\vD_{mn} &\coloneqq \overline{\vB^{\mathrm{full}}_{m,:}}\,\hat{v}_n + \overline{\hat{v}_m}\,\vB^{\mathrm{full}}_{n,:} \in \C^{1\times 2N_{\mathrm{c}}},
\end{align*}
so $\xi_{mn} \approx \hat{\xi}_{mn} + \vD_{mn}(\vx-\hvx)$.
Let $\hat{\boldsymbol{\xi}} \in \C^{|\cE|}$ and $\vD \in \C^{|\cE|\times 2N_{\mathrm{c}}}$ stack $(\hat{\xi}_{mn})_{(m,n)\in\cE}$ and $(\vD_{mn})_{(m,n)\in\cE}$, and let $\boldsymbol{\kappa} \coloneqq (\kappa_{mn})_{(m,n)\in\cE}$.
Substituting these affine expressions into \eqref{eq:angle_pair} yields
\begin{align}
	\vb_{\mathrm{ang}}(\vx) &\approx \vb_{\mathrm{ang}}(\hvx) + \vA_{\mathrm{ang}}(\hvx)(\vx - \hvx), \nonumber \\
	\vb_{\mathrm{ang}}(\hvx) &\coloneqq
	\begin{bmatrix}
		\Im\{\hat{\boldsymbol{\xi}}\} - \diag(\boldsymbol{\kappa})\,\Re\{\hat{\boldsymbol{\xi}}\} \\
		-\Im\{\hat{\boldsymbol{\xi}}\} - \diag(\boldsymbol{\kappa})\,\Re\{\hat{\boldsymbol{\xi}}\}
	\end{bmatrix}, \label{eq:g_ang} \\
	\vA_{\mathrm{ang}}(\hvx) &\coloneqq
	\begin{bmatrix}
		\Im\{\vD\} - \diag(\boldsymbol{\kappa})\,\Re\{\vD\} \\
		-\Im\{\vD\} - \diag(\boldsymbol{\kappa})\,\Re\{\vD\}
	\end{bmatrix}, \label{eq:A_ang}
\end{align}
where $\vb_{\mathrm{ang}}(\vx)$ stacks the left-hand sides of \eqref{eq:angle_pair} over $(m,n)\in\cE$.

\textbf{Branch current.}
Branch-current limits \eqref{eq:i_lim} depend on $|i_{mn}|$, where \eqref{eq:i_def} gives
$i_{mn} = y_{mn}\,(v_m - v_n)$.
Using the extended affine voltage model above,
\begin{align*}
	\hat{i}_{mn} &\coloneqq y_{mn}\,(\hat{v}_m - \hat{v}_n), \\
	\vD^{\mathrm{br}}_{mn} &\coloneqq y_{mn}\,(\vB^{\mathrm{full}}_{m,:} - \vB^{\mathrm{full}}_{n,:}) \in \C^{1\times 2N_{\mathrm{c}}},
\end{align*}
so $i_{mn} \approx \hat{i}_{mn} + \vD^{\mathrm{br}}_{mn}(\vx-\hvx)$.
Let $\hat{\vi}^{\mathrm{br}} \in \C^{|\cE|}$ and $\vD^{\mathrm{br}} \in \C^{|\cE|\times 2N_{\mathrm{c}}}$ stack $(\hat{i}_{mn})_{(m,n)\in\cE}$ and $(\vD^{\mathrm{br}}_{mn})_{(m,n)\in\cE}$.
The same magnitude linearization as in \eqref{eq:vmag_matrix}--\eqref{eq:Av_vmag} gives, for each branch,
$|i_{mn}| \approx |\hat{i}_{mn}| + \vA^{\mathrm{br}}_{mn}(\hvx)(\vx - \hvx)$ with
$\vA^{\mathrm{br}}_{mn}(\hvx) \coloneqq \Re\bigl\{(\overline{\hat{i}_{mn}}/|\hat{i}_{mn}|)\,\vD^{\mathrm{br}}_{mn}\bigr\}$,
or, in stacked form,
\begin{align}
	\vb_{\mathrm{br}}(\vx) &\approx \vb_{\mathrm{br}}(\hvx) + \vA_{\mathrm{br}}(\hvx)(\vx - \hvx), \nonumber \\
	\vb_{\mathrm{br}}(\hvx) &\coloneqq |\hat{\vi}^{\mathrm{br}}|, \nonumber \\
	\vA_{\mathrm{br}}(\hvx) &\coloneqq \diag(|\hat{\vi}^{\mathrm{br}}|)^{-1} \Re\left\{\diag(\overline{\hat{\vi}^{\mathrm{br}}})\,\vD^{\mathrm{br}}\right\}. \label{eq:A_br}
\end{align}
These blocks are stacked with the voltage-magnitude Jacobian $\vA_v(\hvx)$ in the main text to form the full linearized constraint model $\vA(\hvx)\vx + \vb(\hvx)$.

The complete block definitions used in the main text are
\begin{align*}
	&\vA(\hvx) = \begin{bmatrix}
		\vA_v(\hvx) \\
		\vA_s(\hvx) \\
		\vA_\mathrm{ang}(\hvx) \\
		\vA_\mathrm{br}(\hvx)
	\end{bmatrix},
	\vb(\hvx) = \begin{bmatrix}
		|\hvv| - \vA_v(\hvx)\hvx \\
		|\hvs_0| - \vA_s(\hvx)\hvx \\
		\vb_{\mathrm{ang}}(\hvx) - \vA_\mathrm{ang}(\hvx)\hvx \\
		|\hat{\vi}^{\mathrm{br}}| - \vA_\mathrm{br}(\hvx)\hvx
	\end{bmatrix}, \\
	&\vl =
	\begin{bmatrix}
		|\vv|^{\mathrm{min}} \\
		-\infty\,\mathbf{1}_{n_{\mathrm{sub}}} \\
		-\infty\,\mathbf{1}_{n_{\mathrm{ang}}} \\
		-\infty\,\mathbf{1}_{n_{\mathrm{br}}}
	\end{bmatrix},
	\vu =
	\begin{bmatrix}
		|\vv|^{\mathrm{max}} \\
		\vs_0^{\mathrm{max}} \\
		\mathbf{0} \\
		\vi^{\mathrm{br,max}}
	\end{bmatrix}
\end{align*}
where $\vA_v$ is given by \eqref{eq:Av_vmag}, and $\vA_s$, $\vA_\mathrm{ang}$, $\vA_\mathrm{br}$, $\vb_\mathrm{ang}$ are given in Appendix~\ref{appendix:lin_constraints} by \eqref{eq:As_sub}, \eqref{eq:A_ang}, \eqref{eq:A_br}, and \eqref{eq:g_ang}.
The row dimensions are $n_{\mathrm{sub}} := \dim(\vs_0^{\mathrm{max}})$, $n_{\mathrm{ang}} := 2|\cE|$, and $n_{\mathrm{br}} := |\cE|$.

\begin{figure}[t]
	\centering
	\includegraphics[width=0.5\linewidth]{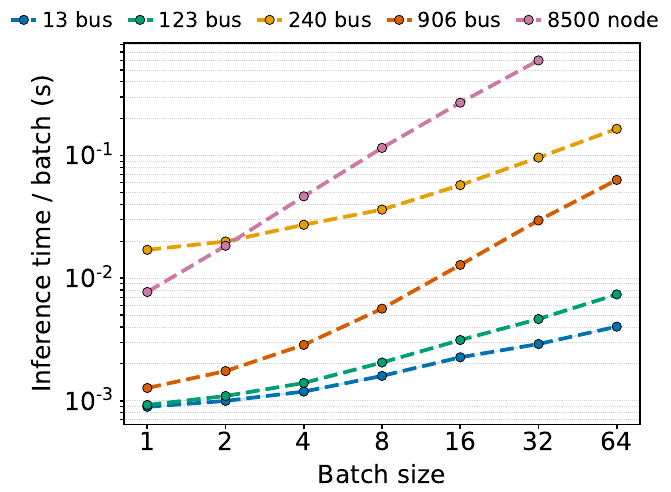}
	\caption{Inference time of the Penalty + SLFS method with different batch sizes.}
	\label{fig:batch_size_inference_time}
\end{figure}

\begin{figure}[t]
	\centering
	\includegraphics[width=0.8\linewidth]{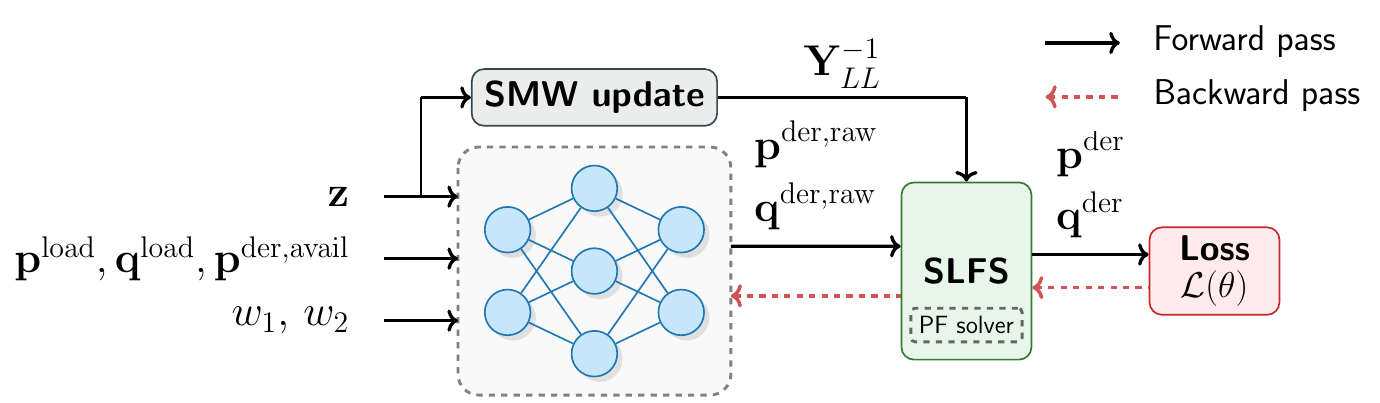}
	\caption{Architecture of the SLFS-FSNet method.}
	\label{fig:fsnet_framework}
\end{figure}

\subsection{Proof of Lemma~\ref{lem:continuity} and Proposition~\ref{prop:approx_error}}

\begin{lemma}[A useful inequality]
\label{lem:inv_square_lipschitz}
Let $c>0$. For any $a,b\in\mathbb C$ satisfying $|a|\ge c$ and $|b|\ge c$,
\[
\left|\frac{1}{a^2}-\frac{1}{b^2}\right|
\le \frac{2}{c^3}\,|a-b|.
\]
\end{lemma}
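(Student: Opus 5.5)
The plan is a direct algebraic factorization followed by elementary modulus bounds. Since $a,b\neq 0$, we can put the difference over a common denominator and factor the numerator as a difference of squares:
\[
\frac{1}{a^2}-\frac{1}{b^2} = \frac{b^2-a^2}{a^2b^2} = \frac{(b-a)(b+a)}{a^2 b^2}.
\]
Taking moduli and using the triangle inequality $|a+b|\le |a|+|b|$ gives
\[
\left|\frac{1}{a^2}-\frac{1}{b^2}\right| \le |a-b|\,\frac{|a|+|b|}{|a|^2|b|^2} = |a-b|\left(\frac{1}{|a|\,|b|^2}+\frac{1}{|a|^2\,|b|}\right).
\]

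Next, apply the hypotheses $|a|\ge c$ and $|b|\ge c$ to each of the two terms in parentheses. Each term is then at most $1/c^3$, so their sum is at most $2/c^3$, which is exactly the claimed bound.

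The only step needing care is the rewriting of $(|a|+|b|)/(|a|^2|b|^2)$ as the split sum above. One cannot bound the numerator $|a|+|b|$ from above directly, since $|a|$ and $|b|$ are only bounded below. Splitting the fraction removes this problem, because every resulting factor of $|a|$ or $|b|$ then appears in a denominator, where the lower bound applies.

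No earlier result of the paper is needed. This lemma will then be applied entrywise with $a=\overline{v_i}$ in the proof of the Lipschitz bound $L_\vv$ in Lemma~\ref{lem:continuity}, since the Jacobian of $\vG$ with respect to $\vv$ involves $\diag(\overline{\vv})^{-2}$.
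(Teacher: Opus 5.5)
Your proof is correct and follows the paper's argument essentially line for line. Both write the difference over the common denominator $a^2b^2$, apply $|a+b|\le |a|+|b|$, split the result into $\frac{1}{|a||b|^2}+\frac{1}{|a|^2|b|}$, and bound each term by $1/c^3$.
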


\begin{proof}
We have
\begin{align*}
\left|\frac{1}{a^2}-\frac{1}{b^2}\right|
&\le |b-a|\frac{|a|+|b|}{|a|^2|b|^2}
= |b-a|\left(\frac{1}{|a||b|^2}+\frac{1}{|a|^2|b|}\right).
\end{align*}
Since $|a|,|b|\ge c$, the bracketed term is at most
\[
\frac{1}{c\cdot c^2}+\frac{1}{c^2\cdot c}
=\frac{2}{c^3}, \qedhere
\]
which proves the claim.
\end{proof}

\noindent\textbf{Proof of Lemma~\ref{lem:continuity}:}
\begin{proof}
    Throughout, write $\vR(\vz)$ for the inverse load-block admittance as in Assumption~\ref{as:operating_domain}.
    \textit{Bound on $\infnorm{J_{\vs}\vG(\vv,\vs)}$}: \\
    With a perturbation $d \vs$ in $\vs$, define
    $d \vG \coloneqq \vG(\vv,\vs+d \vs)-\vG(\vv,\vs)$.    
    By definition, we have $d \vG=\vR(\vz)\diag(\overline{\vv})^{-1}\overline{d \vs}$.
    Using submultiplicativity of induced norms and $\infnorm{\overline{d \vs}}=\infnorm{d \vs}$,
    \begin{align*}
    \infnorm{d \vG}
    &\le \infnorm{\vR(\vz)} \left(\max_i \frac{1}{|v_i|}\right)\infnorm{d \vs} \\
    &\le \frac{\infnorm{\vR(\vz)}}{v^\mathrm{min}} \infnorm{d \vs} =: L_\vs \infnorm{d \vs}.
    \end{align*}
    Taking the supremum over $d \vs$ yields the norm bound
    \begin{align*}
        \infnorm{J_{\vs}\vG(\vv,\vs)} 
        = \sup_{d \vs\neq 0}\frac{\infnorm{d \vG}}{\infnorm{d \vs}} 
        \le \frac{\infnorm{\vR(\vz)}}{v^\mathrm{min}}.
    \end{align*}

    \textit{Bound on $\infnorm{J_{\vv}\vG(\vv,\vs)}$}: \\
    Fixing $\vs$, we have $d\vG = -\vR(\vz)\,\diag(\overline{\vs}) \diag(\overline{\vv})^{-2} d\overline{\vv}$.
    Using $\infnorm{d\overline{\vv}}=\infnorm{d\vv}$, we obtain
    \begin{align*}
    \infnorm{J_{\vv}\vG(\vv,\vs)}
    &\le \infnorm{\vR(\vz)}\infnorm{\diag(\overline{\vs})} \infnorm{\diag(\overline{\vv})^{-2}} \\
    &\le \frac{\infnorm{\vR(\vz)} s^\mathrm{max}}{(v^\mathrm{min})^2}
    =: \beta.
    \end{align*}

    \textit{Lipschitzness of $J_{\vv}\vG(\cdot,\vs)$}: \\
    Let $(\vv_1,\vs),(\vv_2,\vs)\in\Omega$. From the expression for the $\vv$-Jacobian, we have the mapping
    \begin{align*}
        &J_{\vv}\vG(\vv_1,\vs)-J_{\vv}\vG(\vv_2,\vs)
        \;:\; \\
        &\qquad d \vv \mapsto 
        -\vR(\vz)\diag(\overline{\vs})\Big(\diag(\overline{\vv_1})^{-2}-\diag(\overline{\vv_2})^{-2}\Big)\overline{d \vv}.
    \end{align*}
    By submultiplicativity, we have
    \begin{align*}
    &\infnorm{J_{\vv}\vG(\vv_1,\vs)-J_{\vv}\vG(\vv_2,\vs)} \\
    &\le \infnorm{\vR(\vz)}\infnorm{\diag(\overline{\vs})}
    \infnorm{\diag(\overline{\vv_1})^{-2}-\diag(\overline{\vv_2})^{-2}} \\
    &= \infnorm{\vR(\vz)}\infnorm{\vs}
    \max_i \left|\frac{1}{\overline{(v_1)_i}^{2}}-\frac{1}{\overline{(v_2)_i}^{2}}\right|.
    \end{align*}
    Applying Lemma~\ref{lem:inv_square_lipschitz} with
    $a=\overline{(v_1)_i}$ and $b=\overline{(v_2)_i}$ (note that
    $|a|=|(v_1)_i|$ and $|b|=|(v_2)_i|$), and using $\min_i |(v_1)_i|,\min_i |(v_2)_i|\ge v^\mathrm{min}$, we obtain
    \[
    \max_i \left|\frac{1}{\overline{(v_1)_i}^{2}}-\frac{1}{\overline{(v_2)_i}^{2}}\right|
    \le \frac{2}{(v^\mathrm{min})^3}\infnorm{\vv_1-\vv_2}.
    \]
    Therefore, since $\infnorm{\vs}\le s^\mathrm{max}$ on $\Omega$,
    \begin{align*}
        \infnorm{J_{\vv}\vG(\vv_1,\vs)-J_{\vv}\vG(\vv_2,\vs)} \hspace{-1.3cm}&\\
        &\le \frac{2\infnorm{\vR(\vz)}s^\mathrm{max}}{(v^\mathrm{min})^3}\infnorm{\vv_1-\vv_2} \\
        &=: L_\vv\infnorm{\vv_1-\vv_2}. \qedhere
    \end{align*}
\end{proof}

The following lemma is adapted from Lemma 2 in \cite{bolte2023one} to complex matrices, and the proof remains the same.
\begin{lemma}\label{le:matrix_ineq}
    For $\vA \in \C^{n \times n}$, $\infnorm{\vA} < 1$, and $\vB, \tilde{\vB} \in \C^{n \times m}$, then $\|(\vI-\vA)^{-1} \vB - \tilde{\vB}\|_\infty \leq \frac{\infnorm{\vA}}{1-\infnorm{\vA}} \infnorm{\vB} + \|\vB - \tilde{\vB}\|_\infty$.
\end{lemma}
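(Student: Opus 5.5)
The plan is to follow the Neumann-series argument of Lemma 2 in \cite{bolte2023one}, with complex matrices and the induced $\ell_\infty$ norm. Since $\infnorm{\vA}<1$ and induced norms are submultiplicative, $\sum_{k\ge 0}\vA^k$ converges absolutely. It is a two-sided inverse of $\vI-\vA$, because the telescoping product $(\vI-\vA)\sum_{k=0}^{K}\vA^k=\vI-\vA^{K+1}$ tends to $\vI$. Hence
\[
(\vI-\vA)^{-1}=\vI+\vA(\vI-\vA)^{-1},\qquad \infnorm{(\vI-\vA)^{-1}}\le \frac{1}{1-\infnorm{\vA}}.
\]
Nothing in this step uses real structure: complex entries only change the scalar field, and the induced norm is still submultiplicative.

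Next I would split the target as
\[
(\vI-\vA)^{-1}\vB-\tilde{\vB}=\bigl((\vI-\vA)^{-1}-\vI\bigr)\vB+(\vB-\tilde{\vB}) = \vA(\vI-\vA)^{-1}\vB + (\vB-\tilde{\vB}).
\]
Applying the triangle inequality and submultiplicativity gives
\[
\infnorm{\vA(\vI-\vA)^{-1}\vB}\le \infnorm{\vA}\,\infnorm{(\vI-\vA)^{-1}}\,\infnorm{\vB}\le \frac{\infnorm{\vA}}{1-\infnorm{\vA}}\infnorm{\vB}.
\]
Adding $\|\vB-\tilde{\vB}\|_\infty$ then yields the claim.

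The only real obstacle is bookkeeping for the norm on rectangular matrices. For $\vB\in\C^{n\times m}$, $\infnorm{\vB}$ must be read as the induced norm from $(\C^m,\|\cdot\|_\infty)$ to $(\C^n,\|\cdot\|_\infty)$, i.e.\ the maximum absolute row sum. I would state this explicitly so that the submultiplicative bound $\infnorm{\vC\vB}\le\infnorm{\vC}\infnorm{\vB}$ is justified for the mixed-dimension product. If the Jacobians are meant as real representations (as in the Notation section), the same argument applies verbatim to the corresponding real matrices.
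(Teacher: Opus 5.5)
Your proof is correct and takes essentially the same approach as the paper, which gives no separate argument and defers to Lemma~2 of \cite{bolte2023one}: bound $\infnorm{(\vI-\vA)^{-1}-\vI}\le \infnorm{\vA}/(1-\infnorm{\vA})$ by the Neumann series, then apply the triangle inequality after adding and subtracting $\vB$. Your factorization $(\vI-\vA)^{-1}-\vI=\vA(\vI-\vA)^{-1}$ gives the same bound, and reading $\infnorm{\cdot}$ as the induced max-row-sum norm on rectangular (complex or real-representation) matrices is the bookkeeping needed for the complex adaptation.
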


\begin{lemma} \label{le:Jac_M_bounds}
    Under Assumption~\ref{as:operating_domain}, the following bounds hold:
    \begin{align*}
        &\infnorm{J_\vv \Phi_M (\vv_\star(\vs), \vs)} \leq \beta^M, \infnorm{ J_\vs \Phi_M (\vv_\star(\vs), \vs)} \leq \frac{1 - \beta^M}{1 - \beta} L_\vs.
    \end{align*}
\end{lemma}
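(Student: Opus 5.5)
The plan is to prove both bounds by induction on $M$, using the chain rule for the composition $\Phi_M(\vv,\vs) = \vG(\Phi_{M-1}(\vv,\vs),\vs)$. Here $\Phi_M$ denotes the $M$-fold composition $\psi_M$, with $\Phi_0(\vv,\vs)=\vv$. The key observation is that $\vv_\star(\vs)$ is a fixed point of $\vG(\cdot,\vs)$. Hence $\Phi_j(\vv_\star(\vs),\vs) = \vv_\star(\vs)$ for every $j$, and every Jacobian in the chain is evaluated at the same point $(\vv_\star(\vs),\vs)$. We assume this point lies in $\Omega$, which is the standing setting of Assumption~\ref{as:operating_domain}. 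Then Lemma~\ref{lem:continuity} applies at each factor, giving $\infnorm{J_\vv\vG}\le\beta$ and $\infnorm{J_\vs\vG}\le L_\vs$ there.

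For the $\vv$-Jacobian, the chain rule gives
\[
J_\vv\Phi_M(\vv_\star,\vs) = J_\vv\vG(\vv_\star,\vs)\,J_\vv\Phi_{M-1}(\vv_\star,\vs).
\]
The operators are real-linear representations of the complex maps, which are not holomorphic. Still, submultiplicativity of the induced $\infty$-norm holds for compositions of these real-linear maps measured in the complex $\infty$-norm, exactly as used in the proof of Lemma~\ref{lem:continuity}. Induction from $J_\vv\Phi_0 = \vI$ therefore yields $\infnorm{J_\vv\Phi_M}\le\beta^M$.

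For the $\vs$-Jacobian, we differentiate the composition in both arguments:
\[
J_\vs\Phi_M(\vv_\star,\vs) = J_\vv\vG(\vv_\star,\vs)\,J_\vs\Phi_{M-1}(\vv_\star,\vs) + J_\vs\vG(\vv_\star,\vs),
\]
with $J_\vs\Phi_0 = 0$. Writing $a_M \coloneqq \infnorm{J_\vs\Phi_M(\vv_\star,\vs)}$, the triangle inequality gives $a_M \le \beta a_{M-1} + L_\vs$ and $a_0 = 0$. Unrolling this recursion gives $a_M \le L_\vs\sum_{j=0}^{M-1}\beta^j = \frac{1-\beta^M}{1-\beta}L_\vs$. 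The division by $1-\beta$ is valid since $\beta<1$ by Assumption~\ref{as:operating_domain}.

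The only delicate step is justifying that every intermediate point stays in $\Omega$, so that Lemma~\ref{lem:continuity} is applicable at each factor. This is immediate from the fixed-point property, because all intermediate iterates equal $\vv_\star(\vs)$. The remaining care is bookkeeping: confirming that the real-representation Jacobians compose and that their induced $\infty$-norms are the ones bounded in Lemma~\ref{lem:continuity}.
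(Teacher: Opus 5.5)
Your proof is correct and follows essentially the same route as the paper. The paper writes the chain rule in closed form at the fixed point, $J_\vv\Phi_M(\vv_\star(\vs),\vs)=\vA_\star^M$ and $J_\vs\Phi_M(\vv_\star(\vs),\vs)=\sum_{m=0}^{M-1}\vA_\star^m\vB_\star$ with $\vA_\star = J_\vv\vG(\vv_\star(\vs),\vs)$ and $\vB_\star = J_\vs\vG(\vv_\star(\vs),\vs)$, then applies submultiplicativity, the triangle inequality, and Lemma~\ref{lem:continuity}; your recursion $a_M \le \beta a_{M-1} + L_\vs$ is just the inductive form of that same geometric sum.
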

\begin{proof}
    The first bound follows directly from the chain rule:
    \begin{align*}
        \infnorm{J_\vv \Phi_M (\vv_\star(\vs), \vs)} &= \infnorm{\prod_{m=1}^M J_\vv \vG(\vv_\star(\vs), \vs)} \\
        &\leq \infnorm{J_\vv \vG (\vv_\star(\vs), \vs)}^M \leq \beta^M.
    \end{align*}
    To prove the second bound, we define $\vA_\star = J_\vv \vG(\vv_\star(\vs), \vs)$ and $\vB_\star = J_\vs \vG(\vv_\star(\vs), \vs)$. Using the chain rule and triangle inequality, we have
    \begin{align*}
        \infnorm{J_\vs \Phi_M (\vv_\star(\vs), \vs)}
        &= \left\| \sum_{m=0}^{M-1} \vA_\star^m \vB_\star \right\|_\infty \\
        &\leq \sum_{m=0}^{M-1} \infnorm{\vA_\star}^m \infnorm{\vB_\star} \\
        &\leq \sum_{m=0}^{M-1} \beta^m L_\vs = \frac{1 - \beta^M}{1 - \beta} L_\vs. \qedhere
    \end{align*}
\end{proof}

\begin{lemma} \label{le:product_identity}
    For $r \in \N$ and $\vM_1, \dots, \vM_r, \vN_1, \dots, \vN_r \in \C^{n \times n}$, the following identity holds:
    \begin{align*}
        \prod_{t=1}^r \vM_t - \prod_{t=1}^r \vN_t = \sum_{s=1}^r \left( \prod_{t=s+1}^{r} \vM_t \right) (\vM_s - \vN_s) \left( \prod_{t=1}^{s-1} \vN_t \right).
    \end{align*}
    where the empty product is defined as the identity matrix.
\end{lemma}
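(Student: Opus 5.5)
This is a telescoping identity, so I will prove it by induction on $r$, with a direct telescoping argument as a cross-check. The convention is that $\prod_{t=a}^{b}$ for $a\le b$ denotes the ordered product with the largest index on the left. This is the ordering under which the identity is correct: the summand for $s=r$ is $(\vM_r-\vN_r)\vN_{r-1}\cdots\vN_1$, and the one for $s=1$ is $\vM_r\cdots\vM_2(\vM_1-\vN_1)$.

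\textbf{Base case.} For $r=1$ both empty products are $\vI$, and the right-hand side reduces to $\vM_1-\vN_1$.

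\textbf{Inductive step.} Assume the identity holds for $r-1$. Write
\[
\prod_{t=1}^r \vM_t-\prod_{t=1}^r \vN_t=\vM_r\Bigl(\prod_{t=1}^{r-1}\vM_t-\prod_{t=1}^{r-1}\vN_t\Bigr)+(\vM_r-\vN_r)\prod_{t=1}^{r-1}\vN_t ,
\]
which only requires adding and subtracting $\vM_r\prod_{t=1}^{r-1}\vN_t$. The second term is exactly the $s=r$ summand. Applying the induction hypothesis to the bracket and left-multiplying by $\vM_r$ turns each factor $\prod_{t=s+1}^{r-1}\vM_t$ into $\prod_{t=s+1}^{r}\vM_t$. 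This yields the summands for $s=1,\dots,r-1$, which completes the induction.

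\textbf{Direct telescoping alternative.} Define $\vP_s:=\bigl(\prod_{t=s+1}^r\vM_t\bigr)\bigl(\prod_{t=1}^{s}\vN_t\bigr)$ for $s=0,\dots,r$. Then $\vP_0=\prod_{t=1}^r\vM_t$ and $\vP_r=\prod_{t=1}^r\vN_t$. Each summand on the right-hand side equals $\vP_{s-1}-\vP_s$, so the sum telescopes to $\vP_0-\vP_r$.

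No analytic step is involved, since the identity is purely algebraic and holds in any ring. The only real obstacle is fixing the multiplication order consistently. This is the ordering that matches the chain rule for the composition $\vG^{\circ M}$, where later Jacobians multiply on the left, and it is how the lemma will be used in the proof of Proposition~\ref{prop:approx_error}.
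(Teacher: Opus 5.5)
Your proof is correct and matches the paper's argument: induction on $r$, where the inductive step adds and subtracts $\vM_r\prod_{t=1}^{r-1}\vN_t$, so the leftover term is the $s=r$ summand and the induction hypothesis, left-multiplied by $\vM_r$, gives the remaining summands. Your telescoping check via $\vP_s$ is also valid (it is the same induction unrolled), and your ordering convention, largest index on the left, is the one the paper uses implicitly when it writes $\prod_{t=1}^r\vM_t=\vM_r\prod_{t=1}^{r-1}\vM_t$.
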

\begin{proof}
    We prove the identity by induction on $r$. For $r=1$, both sides equal $\vM_1 - \vN_1$. Suppose the identity holds for $r-1$, then we have
    \begin{align*}
        &\prod_{t=1}^r \vM_t - \prod_{t=1}^r \vN_t = \vM_r \prod_{t=1}^{r-1} \vM_t - \vN_r \prod_{t=1}^{r-1} \vN_t \\
        &= \vM_r \prod_{t=1}^{r-1} \vM_t - \vM_r \prod_{t=1}^{r-1} \vN_t + \vM_r \prod_{t=1}^{r-1} \vN_t - \vN_r \prod_{t=1}^{r-1} \vN_t \\
        &= \vM_r \left( \prod_{t=1}^{r-1} \vM_t - \prod_{t=1}^{r-1} \vN_t \right) + (\vM_r - \vN_r) \prod_{t=1}^{r-1} \vN_t \\
        &= \sum_{s=1}^{r} \left( \prod_{t=s+1}^{r} \vM_t \right) (\vM_s - \vN_s) \left( \prod_{t=1}^{s-1} \vN_t \right). \qedhere
    \end{align*}
\end{proof}

\noindent\textbf{Proof of Proposition~\ref{prop:approx_error}:}
\begin{proof}
    For notational brevity, we define 
    \begin{align*}
        &\vA \coloneqq J_\vv \Phi_M (\vv_\star(\vs), \vs), \quad \vB \coloneqq J_\vs \Phi_M (\vv_\star(\vs), \vs), \\
        &\tilde{\vB} \coloneqq J_\vs \Phi_M (\vv_{k - M}(\vs), \vs) \\
        & \vA_t = J_\vv \vG (\vv_t(\vs), \vs), \quad \vB_t = J_\vs \vG (\vv_t(\vs), \vs), \\
        &\vA_\star = J_\vv \vG(\vv_\star(\vs), \vs), \quad \vB_\star = J_\vs \vG(\vv_\star(\vs), \vs).
    \end{align*}
    By chain rule, we immediately have the following relations:
    \begin{align}
        \vB &= \sum_{m=0}^{M-1} \left(\prod_{t=m+1}^{M-1} \vA_\star \right) \vB_\star, \label{eq:B_relations1}\\
        \tilde{\vB} &= \sum_{m=0}^{M-1} \left( \prod_{t=m+1}^{M-1} \vA_{k - M + t} \right) \vB_{k - M + m} \label{eq:B_relations2}
    \end{align}

    The exact Jacobian at the fixed point is given by
    \begin{align*}
        J_\vs \vv_\star(\vs) &= (\vI - J_\vv \Phi_M(\vv_\star(\vs), \vs))^{-1} J_\vs \Phi_M(\vv_\star(\vs), \vs) \\
        &= (\vI - \vA)^{-1} \vB.
    \end{align*}

    Appealing to Lemma~\ref{le:matrix_ineq}, we have 
    \begin{align} \label{eq:Jac_error_bound}
        \| J^{(M)}_\vs \vv_k(\vs) - J_\vs \vv_\star(\vs)\|_\infty \leq \frac{\infnorm{\vA}}{1-\infnorm{\vA}} \infnorm{\vB} + \|\vB - \tilde{\vB} \|_\infty.
    \end{align}

    The first term in the right-hand side can be bounded using Lemma~\ref{le:Jac_M_bounds} as follows:
    \begin{align} \label{eq:Jac_first_bound}
        \frac{\infnorm{\vA}}{1-\infnorm{\vA}} \infnorm{\vB} \leq \frac{\beta^M}{1 - \beta} L_\vs.
    \end{align}

    To bound the second term, we use \eqref{eq:B_relations1} and \eqref{eq:B_relations2} to have 
    \begin{align*}
        \vB - \tilde{\vB} &= \underbrace{\sum_{m=0}^{M-1} \left( \prod_{t=m+1}^{M-1} \vA_{k - M + t} \right) (\vB_\star - \vB_{k - M + m})}_{\vT_1} \\
        &+ \underbrace{\sum_{m=0}^{M-1} \left( \prod_{t=m+1}^{M-1} \vA_\star - \prod_{t=m+1}^{M-1} \vA_{k - M + t} \right) \vB_\star}_{\vT_2}.
    \end{align*}
    
    For $\vT_1$, we have
    \begin{align}
        \infnorm{\vT_1} &\leq \sum_{m=0}^{M-1} \beta^{M -1 - m} L_\vs \infnorm{\Delta \vv_{k-M+m}(\vs)} \nonumber\\
        &\leq \sum_{j=1}^{M} \beta^{j-1} L_\vs \infnorm{\Delta \vv_{k - j}(\vs)} \label{eq:T1_upper}
    \end{align}
    where the first inequality follows from Lemma~\ref{lem:continuity}, and the second inequality is obtained by changing the index $j = M - m$.

    For $\vT_2$, using Lemma~\ref{le:product_identity}, we have
    \begin{align*}
        &\infnorm{\prod_{t=m+1}^{M-1} \vA_\star - \prod_{t=m+1}^{M-1} \vA_{k - M + t}} = \\
        &\infnorm{\sum_{s=m+1}^{M-1} \left( \prod_{t=s+1}^{M-1} \vA_\star \right) (\vA_\star - \vA_{k - M + s}) \left( \prod_{t=m+1}^{s-1} \vA_{k - M + t} \right)} \\
        &\leq \sum_{s=m+1}^{M-1} \beta^{M- s - 1} \infnorm{\vA_\star - \vA_{k - M + s}} \beta^{s - 1 - m} \\
        &\leq \sum_{s=m+1}^{M-1} \beta^{M - m - 2} L_\vv \infnorm{\Delta\vv_{k - M + s}(\vs)}.
    \end{align*}
    where the first inequality follows from the triangle inequality and sub-multiplicative property of matrix norm, and the second inequality is due to Lemma~\ref{lem:continuity}.

    Note that the bracket under the sum in $\vT_2$ is zero when $m=M-1$. Thus, we multiply the above inequality by $\norm{\vB_\star} \leq L_\vs$ and sum over $m=0, \dots, M-2$ to obtain
    \begin{align}
        \infnorm{\vT_2} &\leq L_\vv L_\vs \sum_{m=0}^{M-2} \beta^{M - m - 2} \sum_{s=m+1}^{M-1} \infnorm{\Delta \vv_{k - M + s}(\vs)} \nonumber\\
        &= L_\vv L_\vs \sum_{s=1}^{M-1} \infnorm{\Delta \vv_{k - M + s}(\vs)} \sum_{m=0}^{s-1} \beta^{M - m - 2} \label{eq:T2_upper_1}
    \end{align}
    where the second line swaps the order of summation. Calculating the inner geometric sum, we have
    \begin{align*}
        \sum_{m=0}^{s-1} \beta^{M - m - 2} = \beta^{M - 2} \frac{1 - \beta^{-s}}{1 - \beta^{-1}} &= \beta^{M-1-s} \frac{1 - \beta^s}{1- \beta} \\
        &\leq \frac{\beta^{M-1 -s}}{1 - \beta}.
    \end{align*}
    Plugging this back into \eqref{eq:T2_upper_1}, we get
    \begin{align}
        \infnorm{\vT_2} &\leq \frac{L_\vv L_\vs}{1 - \beta} \sum_{s=1}^{M-1} \beta^{M - 1 - s} \infnorm{\Delta \vv_{k - M + s}(\vs)} \nonumber\\
        &= \frac{L_\vv L_\vs}{1 - \beta} \sum_{j=1}^{M-1} \beta^{j - 1} \infnorm{\Delta \vv_{k - j}(\vs)} \label{eq:T2_upper_2}
    \end{align}
    where we reindex $j = M - s$ in the last line. 

    Combining \eqref{eq:T1_upper} and \eqref{eq:T2_upper_2}, we have
    \begin{align}\label{eq:Jac_second_bound}
        \infnorm{\vB - \tilde{\vB}} \leq &\left(L_\vs + \frac{L_\vv L_\vs}{1 - \beta} \right) \sum_{j=1}^{M-1} \beta^{j - 1} \infnorm{\Delta \vv_{k - j}(\vs)} \nonumber \\
        & + L_\vs \beta^{M-1} \infnorm{\Delta \vv_{k - M}(\vs)}
    \end{align}
    Finally, plugging \eqref{eq:Jac_first_bound} and \eqref{eq:Jac_second_bound} into \eqref{eq:Jac_error_bound} finishes the proof.
\end{proof}

\subsection{Proof of Theorem~\ref{thm:violation_descent1}}
We first present two auxiliary lemmas that will be used in the proof of the main theorem.
\begin{lemma}\label{lem:tech_1}
    Given $\vl \leq \vu$, the function $\phi(\vy) = \relu{\vy - \vu} + \relu{\vl - \vy}$ is $1$-Lipschitz continuous. That is, for any $\vy, \vy' \in \R^n$,
    \begin{align*}
        \infnorm{\phi(\vy) - \phi(\vy')} \leq \infnorm{\vy - \vy'}
    \end{align*}
\end{lemma}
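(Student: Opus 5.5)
The claim is that $\phi$ is $1$-Lipschitz in the $\ell_\infty$ norm. Because $\phi$ acts componentwise, I will reduce it to a scalar statement. For each index $i$, $\phi_i(\vy) = \psi_i(y_i)$, where
\[
\psi_i(t) \coloneqq \relu{t - u_i} + \relu{l_i - t}.
\]
It therefore suffices to show $|\psi_i(t) - \psi_i(t')| \le |t - t'|$ for all real $t, t'$. Taking the maximum over $i$ then gives
\[
\infnorm{\phi(\vy) - \phi(\vy')} = \max_i |\psi_i(y_i) - \psi_i(y'_i)| \le \max_i |y_i - y'_i| = \infnorm{\vy - \vy'}.
\]

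For the scalar claim, I use $l_i \le u_i$ to note that at most one of the two ramps is positive at any $t$. This identifies $\psi_i(t)$ as the distance from $t$ to the interval $[l_i, u_i]$:
\[
\psi_i(t) = \mathrm{dist}(t, [l_i, u_i]) = |t - \clip(t, l_i, u_i)|.
\]
The distance function to a nonempty closed convex set is $1$-Lipschitz. Concretely, let $\pi$ denote the projection. Then
\[
\mathrm{dist}(t,C) \le |t - \pi(t')| \le |t - t'| + \mathrm{dist}(t', C),
\]
and the same holds with $t$ and $t'$ swapped.

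As an alternative that avoids the distance interpretation, $\psi_i$ is piecewise linear with slopes $-1$, $0$, $+1$ on $(-\infty, l_i]$, $[l_i, u_i]$, $[u_i, \infty)$. It is continuous, so its Lipschitz constant is the maximum absolute slope, which is $1$.

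The only point needing care is the role of $l_i \le u_i$. Without it, both ramps could be active at once and the slope would be $0$ rather than $\pm 1$, so the bound still holds, but the distance identification fails. Entries with $l_i = -\infty$, as in the stacked bounds, are handled by reading $\relu{-\infty - t} = 0$, in which case $\psi_i(t) = \relu{t - u_i}$ is clearly $1$-Lipschitz. I expect no real obstacle beyond this bookkeeping.
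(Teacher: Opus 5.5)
Your proof is correct and takes the same route as the paper: reduce to the scalar maps $t \mapsto [t-u_i]_+ + [l_i-t]_+$, show each is $1$-Lipschitz (the paper bounds the derivative or subgradient by $1$, which is your piecewise-linear slope argument), and take the maximum over components. The distance-to-interval identification and your remarks on $l_i \le u_i$ and on infinite bounds are valid additions but are not needed.
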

\begin{proof}
    Let $\phi_j(y_j) = \relu{y_j - u_j} + \relu{l_j - y_j}$. The scalar function $\phi_j$ is $1$-Lipschitz continuous because its derivative (or subgradient) is bounded by $1$.
    
    For any $\vy, \vy' \in \R^n$:
    \begin{align*}
        \infnorm{\phi(\vy) - \phi(\vy')} &= \max_{j=1,\dots,n} |\phi_j(y_j) - \phi_j(y'_j)| \\
        &\leq \max_{j=1,\dots,n} |y_j - y'_j| = \infnorm{\vy - \vy'}.
    \end{align*}
    Thus, $\infnorm{\phi(\vy) - \phi(\vy')} \leq \infnorm{\vy - \vy'}$.
\end{proof}

\begin{lemma}\label{lem:tech_2}
    Under Assumption~\ref{as:linear_model}, we have
    \begin{align*}
        | V(\vx_{k+1}) - V^\text{lin}(\vx_{k+1}; \vx_k) | \leq (\alpha_k)^{1 + \nu} \delta \infnorm{\Delta \vx_k}^{1 + \nu}
    \end{align*}
\end{lemma}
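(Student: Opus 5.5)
The plan is to compare the two violation functions through their common outer map. Write $\phi(\vy) = \relu{\vy - \vu} + \relu{\vl - \vy}$, so that
\[
V(\vx_{k+1}) = \infnorm{\phi(\vg(\vx_{k+1}))}, \qquad V^\text{lin}(\vx_{k+1};\vx_k) = \infnorm{\phi(\vA(\vx_k)\vx_{k+1} + \vb(\vx_k))}.
\]
This reduces the claim to a statement about the arguments of $\phi$.

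First, I apply the reverse triangle inequality for the $\ell_\infty$ norm:
\[
|V(\vx_{k+1}) - V^\text{lin}(\vx_{k+1};\vx_k)| \le \infnorm{\phi(\vg(\vx_{k+1})) - \phi(\vA(\vx_k)\vx_{k+1} + \vb(\vx_k))}.
\]
Next, Lemma~\ref{lem:tech_1} says $\phi$ is $1$-Lipschitz in $\ell_\infty$. It therefore bounds the right-hand side by $\infnorm{\vg(\vx_{k+1}) - \vA(\vx_k)\vx_{k+1} - \vb(\vx_k)}$.

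Then I invoke the first inequality of Assumption~\ref{as:linear_model} with $\hvx = \vx_k$ and $\vx = \vx_{k+1}$. This bounds the previous quantity by $\delta\infnorm{\vx_{k+1} - \vx_k}^{1+\nu}$. By the damping update \eqref{eq:damping}, $\vx_{k+1} - \vx_k = \alpha_k \Delta\vx_k$. Since $\alpha_k \ge 0$, we have $\infnorm{\alpha_k\Delta\vx_k}^{1+\nu} = \alpha_k^{1+\nu}\infnorm{\Delta\vx_k}^{1+\nu}$, which is exactly the claimed bound.

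The only point needing care is whether Assumption~\ref{as:linear_model} applies, namely that $\vx_k$ and $\vx_{k+1}$ lie in the domain of interest. Since $\tilde\vx_{k+1}\in\cF$, and $\vx_k\in\cF$ if the initial point is clipped, convexity of $\cF$ gives $\vx_{k+1}\in\cF$ for $\alpha_k\in[0,1]$. I will state this as the standing interpretation of the domain. Beyond that, the argument is routine.
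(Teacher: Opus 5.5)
Your proposal is correct and matches the paper's proof step for step. You use the reverse triangle inequality together with the $1$-Lipschitz property of $\phi$ from Lemma~\ref{lem:tech_1}, then Assumption~\ref{as:linear_model} with $\hvx = \vx_k$, and then the damping relation $\vx_{k+1}-\vx_k = \alpha_k\Delta\vx_k$. Your extra check that $\vx_{k+1}\in\cF$ by convexity, so that the assumption applies, is a harmless clarification the paper leaves implicit.
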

\begin{proof}
    By triangle inequality and Lemma~\ref{lem:tech_1}, we have
    \begin{align*}
        & |V(\vx_{k+1}) - V^\text{lin}(\vx_{k+1}; \vx_k)| \\
        &\leq \infnorm{\vg(\vx_{k+1}) - (\vA(\vx_k) \vx_{k+1} + \vb(\vx_k))}.
    \end{align*}
    Using Assumption~\ref{as:linear_model}, we obtain
    \begin{align*}
        |V(\vx_{k+1}) - V^\text{lin}(\vx_{k+1}; \vx_k)| 
        &\leq \delta \infnorm{\vx_{k+1} - \vx_k}^{1 + \nu} \\
        &= (\alpha_k)^{1 + \nu} \delta \infnorm{\Delta \vx_k}^{1 + \nu}. \qedhere
    \end{align*}
\end{proof}

\begin{lemma}\label{lem:descent_ineq}
Under Assumption~\ref{as:linear_model}, the update rule \eqref{eq:proximal} and \eqref{eq:damping} yields
\begin{align*}
    V(\vx_{k+1}) \leq V(\vx_{k}) - \alpha_k \frac{\eta}{2} \norm{\Delta \vx_k}^2 + (\alpha_k)^{1 + \nu} \delta \infnorm{\Delta \vx_k}^{1 + \nu}.
\end{align*}
\end{lemma}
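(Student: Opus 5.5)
The plan is to combine three ingredients: Lemma~\ref{lem:tech_2}, which compares the true and linearized violations at $\vx_{k+1}$; convexity of $V^\text{lin}(\cdot\,;\vx_k)$ along the damped segment; and the optimality of $\tilde{\vx}_{k+1}$ in the proximal subproblem \eqref{eq:proximal}. The error term $(\alpha_k)^{1+\nu}\delta\infnorm{\Delta\vx_k}^{1+\nu}$ in the claim will come directly from Lemma~\ref{lem:tech_2}, so the work is to show
\begin{align*}
V^\text{lin}(\vx_{k+1};\vx_k)\le V(\vx_k)-\alpha_k\tfrac{\eta}{2}\norm{\Delta\vx_k}^2 .
\end{align*}

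\emph{Step 1 (exactness at the anchor).} The second part of Assumption~\ref{as:linear_model}, $\vA(\vx_k)\vx_k+\vb(\vx_k)=\vg(\vx_k)$, gives $V^\text{lin}(\vx_k;\vx_k)=V(\vx_k)$ directly from the definitions of $V$ and $V^\text{lin}$.

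\emph{Step 2 (decrease at the proximal point).} I need $\vx_k\in\cF$ so that $\vx_k$ is an admissible competitor in \eqref{eq:proximal}. I will argue this by induction: $\vx_0$ is a clip-layer output, hence in $\cF$. Each $\vx_{k+1}=(1-\alpha_k)\vx_k+\alpha_k\tilde{\vx}_{k+1}$ is a convex combination of two points of the convex set $\cF$, since $\alpha_k\in(0,1]$. This uses the standing requirement $\alpha_k\le 1$, which I will assume; it holds for the choice in Theorem~\ref{thm:violation_descent1}.

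Comparing the proximal objective at its minimizer $\tilde{\vx}_{k+1}$ with its value at $\vx_k$ gives
\begin{align*}
V^\text{lin}(\tilde{\vx}_{k+1};\vx_k)+\tfrac{\eta}{2}\norm{\Delta\vx_k}^2\le V^\text{lin}(\vx_k;\vx_k)=V(\vx_k).
\end{align*}
If a sharper constant were wanted, strong convexity with modulus $\eta$ would upgrade $\tfrac{\eta}{2}$ to $\eta$. The weaker bound suffices here.

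\emph{Step 3 (damping via convexity).} $V^\text{lin}(\cdot\,;\vx_k)$ is convex, being a norm of a convex nonnegative function of an affine map. Therefore
\begin{align*}
V^\text{lin}(\vx_{k+1};\vx_k)\le(1-\alpha_k)V(\vx_k)+\alpha_k V^\text{lin}(\tilde{\vx}_{k+1};\vx_k)\le V(\vx_k)-\alpha_k\tfrac{\eta}{2}\norm{\Delta\vx_k}^2 .
\end{align*}

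\emph{Step 4 (return to the true violation).} Lemma~\ref{lem:tech_2} gives $V(\vx_{k+1})\le V^\text{lin}(\vx_{k+1};\vx_k)+(\alpha_k)^{1+\nu}\delta\infnorm{\Delta\vx_k}^{1+\nu}$. Adding this to Step~3 yields the claim.

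The only step that is not a direct computation is the feasibility bookkeeping in Step~2, namely that $\vx_k\in\cF$ and $\alpha_k\le1$. It is handled by the induction and convexity argument above.
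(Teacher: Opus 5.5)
Your proof is correct and follows the paper's argument essentially step for step: optimality of $\tilde{\vx}_{k+1}$ in \eqref{eq:proximal} against the competitor $\vx_k$, exactness $V^\text{lin}(\vx_k;\vx_k)=V(\vx_k)$, convexity of $V^\text{lin}(\cdot\,;\vx_k)$ along the damped segment, and then Lemma~\ref{lem:tech_2}. Your explicit bookkeeping is a useful addition, because the paper uses both facts only implicitly in its optimality and convexity steps: you show $\vx_k\in\cF$ by induction from the clip-layer output using convexity of $\cF$, and you require $\alpha_k\in[0,1]$ ($\alpha_k=0$ can occur when $\Delta\vx_k=0$, which is harmless).
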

\begin{proof}
    By the optimality of $\tilde{\vx}_{k+1}$ in \eqref{eq:proximal}, we have
    \begin{align*}
        V^\text{lin}(\tilde{\vx}_{k+1}; \vx_k) + \frac{\eta}{2} \norm{\tilde{\vx}_{k+1} - \vx_k}^2 \leq V^\text{lin}(\vx_k; \vx_k)
    \end{align*}
    Since $V^\text{lin}(\vx_k; \vx_k) = V(\vx_k)$ by Assumption~\ref{as:linear_model}, we get
    \begin{align}\label{eq:opt_ineq}
        V^\text{lin}(\tilde{\vx}_{k+1}; \vx_k) &\leq V(\vx_k) - \frac{\eta}{2} \norm{\tilde{\vx}_{k+1} - \vx_k}^2 \nonumber \\
        &= V(\vx_k) - \frac{\eta}{2} \norm{\Delta \vx_k}^2
    \end{align}

    By the convexity of $V^\text{lin}(\cdot; \vx_k)$, we have
    \begin{align}\label{eq:opt_convex_ineq}
        V^\text{lin}(\vx_{k+1}; \vx_k) &= V^\text{lin}((1-\alpha_k)\vx_k + \alpha_k \tilde{\vx}_{k+1}; \vx_k) \nonumber\\
        &\leq (1 - \alpha_k) V^\text{lin}(\vx_k; \vx_k) + \alpha_k V^\text{lin}(\tilde{\vx}_{k+1}; \vx_k) \nonumber\\
        &\leq V(\vx_k) - \alpha_k \frac{\eta}{2} \norm{\Delta \vx_k}^2
    \end{align}
    where the last inequality uses \eqref{eq:opt_ineq} and $V^\text{lin}(\vx_k; \vx_k) = V(\vx_k)$.

    Using Lemma~\ref{lem:tech_2} and \eqref{eq:opt_convex_ineq}, we have
    \begin{align*}
        V(\vx_{k+1}) &\leq V^\text{lin}(\vx_{k+1}; \vx_k) + (\alpha_k)^{1 + \nu} \delta \infnorm{\Delta \vx_k}^{1 + \nu} \\
        &\leq V(\vx_k) - \alpha_k \frac{\eta}{2} \norm{\Delta \vx_k}^2 + (\alpha_k)^{1 + \nu} \delta \infnorm{\Delta \vx_k}^{1 + \nu} \qedhere
    \end{align*}    
\end{proof}

\noindent\textbf{Proof of Theorem~\ref{thm:violation_descent1}:}
\begin{proof}
    From the choice of $\alpha_k$, we have
    \begin{align*}
        \alpha_k^{1 + \nu} \leq \alpha_k \left(\frac{c \eta}{2 \delta}\right) \norm{\Delta \vx_k}^{1 - \nu}
    \end{align*}
    By Lemma~\ref{lem:descent_ineq} and using $\infnorm{\Delta \vx_k} \leq \norm{\Delta \vx_k}$, we have
    \begin{align*}
        V(\vx_{k+1}) &\leq V(\vx_k) - \alpha_k \frac{\eta}{2} \norm{\Delta \vx_k}^2 \\
        & + \alpha_k \left(\frac{c \eta}{2}\right) \norm{\Delta \vx_k}^{1 - \nu} \norm{\Delta \vx_k}^{1 + \nu} \\
        &= V(\vx_k) - (1 - c) \alpha_k \frac{\eta}{2} \norm{\Delta \vx_k}^2 \qedhere
    \end{align*}
\end{proof}


\end{document}